\newcommand{\mbi}[1]{\mbox{\boldmath $ 1$}}
\newtheorem{proposition}{Proposition}
\newcommand{\bm}[1]{\mbox{\boldmath $#1$}}
\newcommand{\argmin}{\mathop{\rm argmin}}
\newcommand{\bms}[1]{\mbox{\scriptsize\boldmath $#1$}}
\title{Information Geometry for Maximum Diversity Distributions}
\author{Shinto Eguchi}
\date{\today}
\begin{document}
\setcitestyle{authoryear,open={(},close={)}}
\maketitle
\begin{abstract}
{In recent years, biodiversity measures have gained prominence as essential tools for ecological and environmental assessments, particularly in the context of increasingly complex and large-scale datasets. 
	We provide a comprehensive review of diversity measures, including the Gini-Simpson index, Hill numbers, and Rao's quadratic entropy, examining their roles in capturing various aspects of biodiversity. Among these, Rao's quadratic entropy stands out for its ability to incorporate not only species abundance but also functional and genetic dissimilarities.
	The paper emphasizes the statistical and ecological significance of Rao's quadratic entropy under the information geometry framework. 
	We explore the distribution maximizing such a diversity measure under linear constraints that reflect ecological realities, such as resource competition or habitat suitability.
	Furthermore, we discuss a unified approach of the Leinster-Cobbold index combining Hill numbers and Rao's entropy, allowing for an adaptable and similarity-sensitive measure of biodiversity. 
	Finally, we discuss the information geometry associated with the maximum diversity distribution focusing on the cross diversity measures such as the cross-entropy.}

\end{abstract}

\newpage
\section{Introduction}
This paper examines the application of information geometry to biodiversity measurement. 
Various indices, including the Simpson-Gini index, Hill numbers, and Rao's quadratic entropy, have been developed and utilized to assess the ecological states of biological communities.
In particular, Rao's quadratic entropy has become a cornerstone of biodiversity research , which incorporates both species abundance and functional or genetic dissimilarity, see \cite{rao1982a,rao1982b}.
By combining species abundance with functional or genetic differences, Rao's quadratic entropy offers a comprehensive tool for ecological and population genetics studies, enabling ecologists to measure and interpret diversity in ways that were not possible before. 
By combining species abundance with functional and genetic differences across large datasets, it provides insights into ecosystem stability, resilience, and conservation needs. As computational power and data accessibility continue to grow, Rao's quadratic entropy will likely remain at the forefront of biodiversity assessments, driving advances in ecological understanding and environmental stewardship.
Rao's groundbreaking contributions in statistical theory and information geometry have had a profound influence on the development of diversity measures, see \cite{rao1945}. 
His introduction of the Fisher-Rao information metric laid the foundation for viewing statistical models as geometric entities, an approach that has become instrumental in machine learning, ecological modeling, and data science, see \cite{{rao1961},{rao1987differential}}.

We present a comprehensive review of diversity indices through the lens of information geometry. 
This approach highlights the deep interconnections between Rao's information-theoretic measures and ecological resilience, revealing how the geometric properties of diversity measures can enhance our understanding of community structure.
Specifically, we discuss the maximum diversity distributions under linear constraints, examining how ecological factors, such as resource limitations or habitat suitability, can be represented as geometric constraints on species distributions. 
We reduce the problem of finding the maximal distribution to a mathematical optimization characterized by the Karush-Kuhn-Tucker (KKT) conditions to find an optimal solution.
We demonstrate that the maximum Hill-number distributions can be characterized using \( q \)-geodesics, which generalize the concept of geodesic paths in the simplex and provide insightful connections between statistical estimation and ecological interpretations.
This approach extends Rao's vision of information geometry, providing ecologists and conservationists with a robust method for identifying optimal diversity configurations in response to environmental constraints. 
In addition, we propose practical applications for this framework, emphasizing how Rao's contributions to information geometry continue to shape and advance biodiversity research, guiding conservation strategies and promoting a deeper understanding of ecosystem dynamics.
From this information-geometric viewpoint, we offer valuable insights into ecosystem stability and resilience. Our approach not only advances theoretical understanding but also has practical implications for biodiversity conservation and environmental management.

The paper is organized as follows:
Section \ref{sect2} builds a mathematical framework for diversity measures, in which we overview the standard indices of diversity such as the Gini-Simpson index, the Hill number and Rao's quadratic entropy.
In Section \ref{sect3} we present an idea of maximum divergence distributions under a linear constraint focusing on the Hill numbers.  The approach is extended to a case of multiple constraints. The ecological interpretation is explored from the practical point of view. 
Section \ref{sect4} gives information-geometric understanding for the maximum divergence distributions
in a simplex, or a space of categorical distributions.  
Notably, we associate a foliation, a partitioning into submanifolds, of the simplex with the maximum divergence distribution under linear constraints. This geometric structure provides deeper insight into the behavior of diversity measures within the simplex of categorical distributions.
In Section \ref{sect5} we discuss the maximum divergence distribution in comprehensive perspectives.


\section{Biodiversity measures}\label{sect2}

We provide an overview of common measures used to quantify the diversity of a community, cf. \cite{may1975}. 
Consider a community with \( S \) species, where \( p_i \) represents the relative abundance of species \( i \). The set of all possible relative abundance vectors is represented by the \( (S-1) \)-dimensional simplex:
\[
\Delta_{S-1} = \Big\{ \bm{p} = (p_1, \ldots, p_S) : \sum_{i=1}^S p_i = 1, \, p_i \geq 0 \, (i = 1, \ldots, S) \Big\}.
\]
Henceforth, we will identify \( \Delta_{S-1} \) with the set of all \( S \)-variate categorical distributions. 
The following measures of diversity, applied in various ecological contexts, enable various views of biodiversity, emphasizing different aspects from species evenness and dominance.
The Gini-Simpson index  is given by
\[
D_\text{GS}({\bm p}) = 1 - \sum_{i=1}^S p_i^2
\]
for ${\bm p}=(p_i)$ of $\Delta_{S-1}$, see \cite{simpson1949}.
The Gini-Simpson index represents the probability that two randomly selected individuals from a community belong to different species. This index is sensitive to species evenness and increases as species abundances become more evenly distributed.
The Hill Numbers provide a general framework for diversity that is sensitive to the order \( q \), which adjusts the emphasis on species richness versus evenness. The Hill number of order \( q \) is defined as:
\begin{align}\label{Hill}
	{}^q\!D({\bm p}) = \left( \sum_{i=1}^S p_i{}^q\! \right)^{\frac{1}{1 - q}}
\end{align}
where \( q \) is a controlling parameter that determines the sensitivity of the index to species abundances, see \cite{hill1973}.
When \( q = 0 \), \( {}^0\!D({\bm p}) = S \), which is simply species richness (the count of species).
When \( q \) goes to $1$, the measure becomes Boltzmann-Shannon entropy, with \( {}^q\!D({\bm p}) \) calculated as
\(
{}^1\!D({\bm p}) = \exp\left(H({\bm p}) \right),
\)
where
\[
H({\bm p}) = -\sum_{i=1}^S p_i \log(p_i).
\]
This form uses the exponential of Boltzmann-Shannon entropy  $H({\bm p})$ to remain in the Hill framework.
This quantifies the uncertainty in predicting the species of a randomly chosen individual. It is maximized when all species are equally abundant, indicating high diversity, and decreases as dominance by fewer species increases.
When \( q = 2 \), The measure reduces to the inverse Simpson index
\[
{}^2\!D({\bm p}) = \left( \sum_{i=1}^S p_i^2 \right)^{-1}
\]
which gives greater weight to common species and is particularly useful for measuring dominance.
Thus, the parameter \( q \) adjusts the sensitivity of the Hill number to species abundances, with higher \( q \)-values emphasizing dominant species and lower values favoring rare species.
The Berger-Parker index focuses on dominance by measuring the proportional abundance of the most abundant species
\[
D_{\text{BP}}({\bm p}) = \max(p_i)
\]
where \( p_i \) is the relative abundance of each species. The Berger-Parker index is useful for identifying communities dominated by a single or few species, with higher values indicating lower diversity and higher dominance by certain species.
Alternatively, the inverse Berger-Parker  index  \( 1/D_{\text BP}({\bm p}) \) can be used as a diversity measure, where higher values indicate greater diversity.
When the order of the Hill number goes to infinity, it goes the inverse Berger-Parker index. This is because
\[
\lim_{q\rightarrow\infty} {}^q\!D({\bm p}) = \lim_{q\rightarrow\infty} \left[\max(p_i)
\left\{ \sum_{i=1}^S \left(\frac{p_i}{\max(p_i)} \right)^q\right\}^{\frac{1}{q}}\right]^\frac{q}{1-q}
\]
which is nothing but \( 1/D_{\text BP}({\bm p}) \). Alternatively, as $q\rightarrow-\infty$, the Hill number converges to the inverse of the smallest relative abundance: $\lim_{q\rightarrow-\infty} {}^q\!D({\bm p})=1/\min_i(p_i)$.

Rao' s quadratic entropy is a measure of diversity that accounts not only for species abundance but also for the dissimilarity between species, such as functional or genetic differences, see \cite{rao1982a,rao1982b}. 
It is defined as
\[
{\mathrm Q}({\bm p}) = \sum_{i=1}^S \sum_{j=1}^S p_i p_j {W}_{ij},
\]
where \(\bm p=(p_1,...,p_S) \) represents the relative abundances, and \(W_{ij} \) is a measure of dissimilarity between species \( i \) and \( j \) satisfying $W_{ii}=0$ and $W_{ij}\geq0$.
For example, the dissimilarity matrix is often derived from a distance matrix calculated using measures such as Gower's distance, Manhattan distance, and others, see \cite{{botta-dukat2005},{southwood2009}}.
Rao' s quadratic entropy increases with both the abundance of different species and their functional or genetic differences. 

When \( W_{ij} = 1 \) for all \( i \neq j \), Rao's quadratic entropy reduces to a form of the Gini-Simpson  index. 
This measure is particularly useful for quantifying functional diversity in ecosystems where species differ in ecological roles or genetic traits.
The Hill number \( {}^q\!D({\bm p}) \) is a family of diversity indices parameterized by \( q \), which controls the sensitivity of the measure to species abundances. 
Hill numbers are `effective' numbers, meaning they give an intuitively meaningful measure of the effective number of species in a community by reflecting both species abundances and the emphasis on common versus rare species through the parameter \( q \).
Rao's quadratic entropy, on the other hand, incorporates both species abundances and a dissimilarity matrix that quantifies pairwise similarities between species, allowing it to account for the phylogenetic or functional differences between them. Mathematically, it is defined as the expected dissimilarity between two randomly chosen individuals from a community.
In essence, Rao's entropy provides a diversity measure that is sensitive not only to species abundances but also to their similarity, making it valuable for communities where species may vary significantly in terms of genetic, functional, or ecological traits.
To integrate these two concepts the Leinster-Cobbold index is proposed by
\[
{}^q\!D_{\bms Z}({\bm p}) = \left( \sum_{i=1}^S p_i({\bm Z}{\bm p})_i^{q-1} \right)^{\frac{1}{1 - q}}
\]
where ${\bm Z}$ is the similarity matrix with entries $Z_{ij}$ representing the similarity between species $i$ and $j$ satisfying $Z_{ii}=1$ and $0\leq Z_{ij}\leq1$, see \cite{leinster-cobbold2012}.
The term $(\bm Z \bm p)_i=\sum_{j=1}^S Z_{ij}p_j$ is the average similarity of species $i$ to all species in the community, weighted their relative abundances.
This retains the effective number interpretation of Hill numbers while incorporating a similarity matrix like Rao's quadratic entropy. 
For a given community, they define a generalized diversity measure that combines both relative abundances and similarities among species. 
The sensitivity parameter \( q \) of the Hill number framework is retained, controlling the emphasis on rare versus common species, and a similarity matrix \( {\bm Z} \) quantifies species resemblance.
Leinster and Cobbold's index \( {}^2\!D_{\bms Z}({\bm p}) \) with $q=2$ simplifies
\[
{}^2\!D_{\bms Z}(\bm p)=\frac{1}{\bm p^\top \bm Z\bm p},
\]
which, under some conditions, aligns with the inverse of Rao's quadratic entropy when $\bm Z={\bf I}-\bm W$ (i.e., the similarity matrix is derived from the dissimilarity matrx).
This makes it a direct similarity-sensitive generalization of the Gini-Simpson index.
By varying \( q \), one obtains a diversity profile that reflects different levels of sensitivity to rare and common species, while the similarity matrix \( {\bm Z} \) ensures that closely related species contribute less to the overall diversity measure. This connection allows both Hill numbers and Rao's entropy to be seen as part of a single framework, with \( {}^q\!D_{\bms Z}({\bm p}) \) converging to the traditional Hill numbers when species are maximally distinct (i.e., the similarity matrix \( {\bm Z} \) is an identity matrix).
In this unified measure, the flexibility of Hill numbers in emphasizing different aspects of species abundance combines with Rao's consideration of species dissimilarity, providing a comprehensive and adaptable biodiversity measure that bridges the strengths of both approaches.


\section{Maximum diversity distributions}\label{sect3}
This section explores the mathematical foundations of maximizing biodiversity measures, focusing on Hill numbers, Rao's quadratic entropy, and the Leinster-Cobbold index. 
We present optimization results under realistic ecological constraints, such as linear resource limits, and examine their implications for understanding community dynamics. 
By integrating these mathematical insights with ecological interpretations, we aim to provide a robust framework for evaluating and managing biodiversity in both theoretical and applied contexts.

Hill numbers ${}^q\!D({\bm p})$are interpreted as effective species counts, emphasizing their bounded nature and ensuring that the do not exceed the actual species count $S$, see \cite{{hill1973},{jost2006}}.
Here, we confirm the maximum of the Hill numbers within a mathematical framework.

\

\begin{proposition}\label{prop1}
	Let ${\bm e}=(e_i)$ be a uniform distribution, that is, $e_i=\frac{1}{S}$ for $i=1,...,S$.
	Then, the Hill number ${}^q\!D({\bm p})$ attains its maximum value $S$ at ${\bm e}$.
\end{proposition}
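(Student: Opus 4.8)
The plan is to treat this as a constrained optimization / inequality problem and handle the two regimes $q>1$ and $q<1$ separately, since the exponent $\frac{1}{1-q}$ flips sign. The key observation is that the map $t\mapsto t^{\frac{1}{1-q}}$ is increasing on $(0,\infty)$ when $q<1$ and decreasing when $q>1$, so maximizing ${}^q\!D(\bm p)=\bigl(\sum_i p_i^q\bigr)^{\frac{1}{1-q}}$ is equivalent to maximizing $\sum_i p_i^q$ when $q<1$ and to minimizing $\sum_i p_i^q$ when $q>1$. In both cases the relevant extremum of $\sum_i p_i^q$ over $\Delta_{S-1}$ is attained at the uniform distribution $\bm e$, where $\sum_i e_i^q = S\cdot S^{-q}=S^{1-q}$, and then $(S^{1-q})^{\frac{1}{1-q}}=S$, giving the claimed value.

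First I would dispose of the boundary cases $q=0$ and $q\to1$ by citing the identities already recorded in the excerpt: ${}^0\!D(\bm p)=S$ is constant, and ${}^1\!D(\bm p)=\exp(H(\bm p))$ with $H$ maximized at $\bm e$ by the standard entropy bound $H(\bm p)\le\log S$. For the generic case I would invoke convexity: the function $\phi(x)=x^q$ is strictly concave on $[0,\infty)$ for $0<q<1$ and strictly convex for $q>1$. Applying Jensen's inequality to the uniform average $\frac1S\sum_i p_i^q$ versus $\bigl(\frac1S\sum_i p_i\bigr)^q=(1/S)^q$ yields, for $0<q<1$,
\[
\frac1S\sum_{i=1}^S p_i^q \le \Bigl(\frac1S\sum_{i=1}^S p_i\Bigr)^q = S^{-q},
\]
hence $\sum_i p_i^q\le S^{1-q}$, with equality iff all $p_i$ are equal, i.e. $\bm p=\bm e$. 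Raising to the positive power $\frac{1}{1-q}$ preserves the inequality, so ${}^q\!D(\bm p)\le S={}^q\!D(\bm e)$. For $q>1$ the inequality in the Jensen step reverses, giving $\sum_i p_i^q\ge S^{1-q}$; but now the exponent $\frac{1}{1-q}$ is negative, so raising to that power reverses the inequality again and we recover ${}^q\!D(\bm p)\le S$. One should also cover $q<0$: there $\phi(x)=x^q$ is convex where defined, but abundances may vanish, so I would note that ${}^q\!D$ extends continuously with $\sum_i p_i^q\to+\infty$ whenever some $p_i=0$ and $q<0$, making $\frac{1}{1-q}\in(0,1)$ again flip things correctly, or simply restrict attention to the interior and argue the supremum on the interior is $S$ attained at $\bm e$.

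The main obstacle is not any single estimate but the bookkeeping of sign flips: one must track carefully that the composition of "reverse inequality from convexity" with "reverse inequality from a negative exponent" lands on the correct side in each of the regimes $0<q<1$, $q>1$, and $q<0$, and that the equality case $\bm p=\bm e$ is genuinely the maximizer rather than a minimizer in the transformed problem. A clean way to sidestep the fragility is to write ${}^q\!D(\bm p)=\exp\!\bigl(\tfrac{1}{1-q}\log\sum_i p_i^q\bigr)$ and observe that $\tfrac{1}{1-q}\log\sum_i p_i^q$ is a known monotone reparametrization (the Rényi entropy of order $q$) which is maximized at $\bm e$ for all admissible $q$; then continuity at $q=1$ ties the two pieces together. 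I would present the Jensen argument as the core and relegate the sign-tracking to a short case split, explicitly confirming $\sum_i e_i^q = S^{1-q}$ and $(S^{1-q})^{1/(1-q)}=S$ so the attained value is exactly $S$.
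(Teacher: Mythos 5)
Your core argument---Jensen's inequality applied to $x\mapsto x^q$, with a case split on $0<q<1$ (concave, exponent $\tfrac{1}{1-q}>0$) versus $q>1$ (convex, negative exponent, so the inequality flips back), the value check $\big(S^{1-q}\big)^{1/(1-q)}=S$, and the equality case $\bm p=\bm e$---is exactly the proof the paper gives, and it is correct on that range; the $q=0$ and $q\to 1$ observations are fine as well.

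However, your aside on $q<0$ is wrong, and in fact the stated maximality fails there. For $q<0$ the map $x\mapsto x^q$ is convex on $(0,\infty)$, so Jensen gives $\sum_i p_i^q\ge S^{1-q}$; since $\tfrac{1}{1-q}\in(0,1)$ is now \emph{positive}, this yields ${}^q\!D(\bm p)\ge S$, i.e.\ the uniform distribution is a minimizer on the interior, and moreover ${}^q\!D(\bm p)\to+\infty$ as any $p_i\to 0$, so the supremum over the interior is infinite, not $S$. Concretely, with $S=2$, $q=-1$, $\bm p=(0.99,\,0.01)$ one gets ${}^{-1}\!D(\bm p)=\big(1/0.99+1/0.01\big)^{1/2}\approx 10>2$; this is consistent with the paper's own remark that ${}^q\!D(\bm p)\to 1/\min_i p_i$ as $q\to-\infty$. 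So neither of your proposed fixes---the claim that the boundary blow-up ``flips things correctly,'' or restricting to the interior and asserting the supremum there is $S$ at $\bm e$---can be made to work. The clean resolution is simply to restrict the proposition to $q\ge 0$, the standard range for Hill numbers. (Note the paper's own Case 2 asserts concavity of $p^q$ for all $q<1$, which likewise holds only for $0\le q<1$, so on the legitimate range your argument coincides with the paper's and the defect is confined to the $q<0$ digression.)
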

\begin{proof}
	We aim to show that ${}^q\!D({\bm p})\leq S$ for all $\bm p\in\Delta_{S-1}$, with equality if and only if $\bm p=\bm e$.\\
	Case 1: $q > 1$. 
	The function $f(p)=p^q$ is convex on $[0,1]$ when $q>1$.
	By Jensen's inequality for convex functions:
	\begin{align}\label{conclude}
		\frac{1}{S}\sum_{i=1}^S p_i{}^q \geq\Big( \frac{1}{S}\sum_{i=1}^S{p_i}\Big)^q=S^{-q}.
	\end{align}
	Thus, ${}^q\!D({\bm p})\leq S$. The equality holds when and only when ${\bm p}={\bm e}$.\\
	Case 2: $q<1$, The function $f(p)=p^q$ is concave on $[0,1]$ when $q<1$.
	The reverse of inequality \eqref{conclude} holds and hence, $\sum_{i=1}^S p_i{}^q\! \leq S^{1-q}$, or we get the same inequality: ${}^q\!D({\bm p})\leq S$ due to $1-q>0$. Similarly, the equality holds when and only when ${\bm p}={\bm e}$.
	Thus, in both cases, the maximum Hill number is $S$ and is attained uniquely at the uniform distribution $\bm e$.
\end{proof}
Extending this result, over a \((K-1)\)-dimensional simplex $\Delta_{K-1}$ (i.e., considering only $K$ out of $S$ species),  the Hill number \({}^q\!D({\bm p})\) attains its maximum value \(K\)  when  \({\bm p}\) is the uniform distribution over those  \(K\) species, with \(p_i = \frac{1}{K}\) for the $K$ species and \(p_i = 0\) for the remaining \(S-K\) species.
In this way, the Hill number satisfies $1\leq{}^q\!D({\bm p})\leq S$. When ${}^q\!D({\bm p})=1$,  the community is entirely dominated by a single species, indicating. 
Conversely, when ${}^q\!D({\bm p})=S$, the community has maximum diversity with species occurring in equal abundances.  
Therefore, the Hill number ${}^q\!D({\bm p})$ represents
the effective number of species, providing a meaningful measure of diversity that accounts for both species richness and evenness. 
It is helpful to compare values of the Hill number across different ecosystems or treatments, considering both the index value and how close it is to its theoretical maximum for that number of species, see \cite{chao1987} for ecological perspectives.

However, in real ecosystems, perfect evenness is rare, and the maximizing may not always be ecologically desirable or feasible, see \cite{colwell-coddington1994}.
Maximizing the Hill number might conflict with other ecological goals or constraints.
To address this,  we introduce a linear constraint on species distribution to reflect real-world ecological limits like resource availability, climate tolerance, and habitat capacity.
We fix a linear constraint
\[
\sum_{i=1}^S a_i p_i = C,
\]
where $a_i$'s and $C$ are fixed constants.
Here, without losing generality, we assume $a_i$ is positive.
The constant $C$ reflects a fixed ecological limit, which lies within the range $a_{\rm min}\leq C\leq a_{\rm max}$ to ensure the existence for feasible $\bm{p}=(p_i)$, where $a_{\min}=\min_{i}a_i$ and $a_{\max}=\max_{i}a_i$.
In practice, the value of \( C \) is often set as
\(
C = \sum_{i=1}^S a_i \hat\pi_i,
\)
where \( \hat\pi_i \) represents an initial estimate of species proportions based on preliminary field data or other relevant ecological information. This approach anchors \( C \) in observed or estimated ecological conditions, providing a realistic basis for the constraint on species distribution.
This can represent ecological factors such as resource preference, habitat suitability, or species-specific traits.
For instance, in a desert ecosystem, \( a_i \) might represent drought tolerance in plants, favoring species with high \( a_i \) values as they better adapt to dry conditions.
The distribution can reflect species' access to shared resources under competition. Species with high \( a_i \) values may beat  others in resource acquisition, resulting in larger \( p_i \) values. 

In this context, the linear constraint model is particularly relevant for habitats where specific resources are scarce and heavily contested. Such constraints shape community structures by influencing which species thrive, based on factors such as resource efficiency or adaptability to environmental conditions.
For a general value of $q$ in the Hill number framework, the maximum Hill number distribution under a linear constraint is characterized by a power-law form of $q$  as follows.

\

\begin{proposition}\label{prop2}
	Assume a linear constraint $\sum_{i=1}^Sa_i p_i =C$, where $a_i$'s and $C$ are constants.
	\begin{enumerate} 
		\item For $q<1$,  the Hill number \({}^q\!D(\bm p)\) is maximized at
		\begin{align}\label{hat-p}
			\hat p_{\theta i} = \frac{ (1+(q-1)\theta a_i)^{\frac{1}{q - 1}}}{\sum_{j=1}^S (1+(q-1)\theta a_j)^{\frac{1}{q - 1}}} \quad(i=1,...,S),
		\end{align}
		if $C$ satisfies $ C_{\bms a} \leq C \leq a_{\max}$, where $\theta$ is determined by the constraint.  
		Here
		\begin{align}\nonumber 
			C_{\bms a}= \frac{ \sum_{i=1} a_i^{\frac{q}{q-1}}}{\sum_{i=1} a_i^{\frac{1}{q-1}}} \quad(i=1,...,d).
		\end{align}
		\item For $q>1$,  the Hill number \({}^q\!D(\bm p)\) is maximized at
		\begin{align}\label{hat-p2}
			\hat p_{\theta i} = \frac{ [1+(q-1)\theta a_i]_+^{\frac{1}{q - 1}}}{\sum_{j=1}^S [1+(q-1)\theta a_j]_+^{\frac{1}{q - 1}}} \quad(i=1,...,S),
		\end{align}
		if $C$ satisfies $ a_{\min} \leq C \leq C_{\bms a}$, where $[x ]_+=\max\{x,0\}$, $\theta$ is determined by the constraint.
	\end{enumerate}  
\end{proposition}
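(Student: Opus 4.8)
The plan is to reduce the problem to a convex program and then characterize its solution through the Karush--Kuhn--Tucker (KKT) conditions. Because $t\mapsto t^{1/(1-q)}$ is strictly increasing for $q<1$ and strictly decreasing for $q>1$, maximizing ${}^q\!D(\bm p)$ over the feasible polytope $\mathcal F_C=\{\bm p\in\Delta_{S-1}:\sum_i a_ip_i=C\}$ is equivalent to maximizing $F_q(\bm p)=\sum_i p_i^{\,q}$ when $q<1$ and to minimizing $F_q(\bm p)$ when $q>1$. As already used in the proof of Proposition~\ref{prop1}, $p\mapsto p^{q}$ is concave for $q<1$ and convex for $q>1$; so in either case we optimize a convex objective over a polytope whose defining constraints are affine, and the KKT conditions are both necessary and sufficient for global optimality. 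It therefore suffices to produce a KKT point of the asserted form and to verify that the parameter $\theta$ can be tuned so that the linear constraint holds.

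First I would form the Lagrangian, with a multiplier $\lambda$ for $\sum_i p_i=1$, a multiplier $\mu$ for $\sum_i a_ip_i=C$, and multipliers $\nu_i\geq0$ for $p_i\geq0$. On the support $\{i:p_i>0\}$ complementary slackness gives $\nu_i=0$, so stationarity reduces to $q\,p_i^{\,q-1}=\lambda+\mu a_i$ (the same equation in the maximization and the minimization case), whence $p_i\propto(\lambda+\mu a_i)^{1/(q-1)}$; absorbing $\lambda$ and $\mu$ into one parameter $\theta$ and the overall positive constant into the normalization yields $p_i\propto(1+(q-1)\theta a_i)^{1/(q-1)}$. For $q<1$ the function $p\mapsto p^{q}$ has an infinite one-sided derivative at $0$, so no coordinate of a maximizer can vanish; hence the maximizer has full support and is exactly \eqref{hat-p}. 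For $q>1$, complementary slackness permits $\nu_i>0$ precisely on the species with $1+(q-1)\theta a_i\leq0$, which are then forced to $p_i=0$; checking that the induced $\nu_i$ are indeed nonnegative there --- equivalently, that species drop out in order of decreasing $a_i$ --- produces the truncated form \eqref{hat-p2}.

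Next I would show that $\theta$ is determined by the constraint. Put $\Phi(\theta)=\sum_i a_i\hat p_{\theta i}=\mathbb E_{\hat p_\theta}[a]$, a continuous function on the natural domain of $\theta$. Evaluating its limits at the ends of that domain fixes the admissible range of $C$: for $q<1$, as $\theta$ increases to the value at which the factor with $a_i=a_{\max}$ blows up one gets $\Phi\to a_{\max}$, while $\theta\to-\infty$ gives $\hat p_{\theta i}\to a_i^{1/(q-1)}/\sum_j a_j^{1/(q-1)}$ and hence $\Phi\to C_{\bm a}$; for $q>1$ the symmetric computation gives the limits $a_{\min}$ and $C_{\bm a}$. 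By the intermediate value theorem, every $C$ in the stated interval is realized by some admissible $\theta$, and together with KKT-sufficiency this proves the proposition. To see that $\theta$ is unique I would differentiate: writing $\hat p_{\theta i}=w_i/\sum_j w_j$ with $w_i=(1+(q-1)\theta a_i)^{1/(q-1)}$ gives $\Phi'(\theta)=\mathrm{Cov}_{\hat p_\theta}(a,b)$, where $b_i=a_i/(1+(q-1)\theta a_i)$ is a strictly increasing function of $a_i$ on the region where $w_i>0$; thus $a$ and $b$ are comonotone, the covariance is strictly positive unless all $a_i$ coincide, and $\Phi$ is strictly monotone.

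I expect the $q>1$ case to be the main obstacle. The delicate points there are (i) checking that the zeroed-out set is exactly $\{i:1+(q-1)\theta a_i\leq0\}$ and that all complementary-slackness sign conditions hold simultaneously, i.e.\ that species leave the support in decreasing order of $a_i$; (ii) carrying out the endpoint limits that isolate $C_{\bm a}$; and (iii) verifying that monotonicity of $\Phi$ persists across the finitely many values of $\theta$ at which the support of $\hat p_\theta$ changes --- on each piece the covariance identity applies to the current support, and $\Phi$ is continuous at the thresholds. The degenerate endpoints $C=a_{\max}$ (for $q<1$) and $C=a_{\min}$ (for $q>1$), where the optimizer collapses to a point mass on an extremal species and $\theta$ leaves its domain, are best handled as limits. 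I also note that the path $\theta\mapsto\hat p_\theta$ is a $q$-geodesic in the simplex, which is the structure exploited in Section~\ref{sect4}.
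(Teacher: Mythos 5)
Your proposal is correct and follows essentially the same route as the paper's proof: reduce the problem to optimizing $\sum_i p_i^{\,q}$, obtain the power-law form \eqref{hat-p}--\eqref{hat-p2} from the KKT stationarity and complementary-slackness conditions (with the truncation appearing only for $q>1$), and pin down $\theta$ by showing $\theta\mapsto\sum_i a_i\hat p_{\theta i}$ is monotone --- your covariance-of-comonotone-functions identity is exactly the paper's $S_1S_4-S_2S_3$ variance computation --- together with the endpoint limits $C_{\bms a}$, $a_{\max}$, $a_{\min}$. Your explicit use of convexity to make KKT sufficient, the uniqueness of $\theta$, and the support-change bookkeeping for $q>1$ are only sharpenings of steps the paper states tersely.
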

\begin{proof}
	Our goal is to maximize \( {}^q\!D({\bm p}) \) under the constraints \( \sum_{i=1}^S p_i = 1 \) and  \( \sum_{i=1}^S a_i p_i =C \).\\
	Case 1: $q<1$. Since  \( {}^q\!D({\bm p}) \) is an increasing function of $\sum_{i=1}^S p_i^q$ when $q<1$, maximizing  \( {}^q\!D({\bm p}) \) is equivalent to maximizing  $\sum_{i=1}^S p_i^q$.
	We construct the Lagrangian: 
	\[
	\mathcal{L} = \frac{1}{q}\sum_{i=1}^S p_i{}^q - \lambda \Big( \sum_{i=1}^S a_i p_i -C \Big) - \mu \Big( \sum_{i=1}^S p_i - 1 \Big) - \sum_{i=1}^S \nu_i p_i,
	\]
	where  \( \mu \), \( \lambda \), and $\nu_i\geq0$ are Lagrange multipliers.
	Taking the derivative of \( \mathcal{L} \) with respect to \( p_i \) and setting it to zero gives:
	\[
	\frac{\partial \mathcal{L}}{\partial p_i} =  p_i^{q - 1} - \lambda a_i - \mu - \nu_i= 0
	\]
	If $p_i>0$, then $\nu_i=0$, and thus $p_i^{q-1}\propto \mu+\lambda a_i$.
	If $p_i=0$,  then $p_i^\frac{1}{q-1}$ is undefied because of $q<1$, so all $p_i$ must be positive.
	Setting $\theta=\lambda/\mu(q-1)$, we obtain the global maximizer $\hat p_{\theta i}$ in \eqref{hat-p}
	since $( \hat p_{\theta i})_i$ satisfies  the KKT conditions.  
	
	Next, we address the existence $\theta$ in $(-1/a_{\max},\infty)$ such that $(p_{\theta i})$ satisfies the linear constraint.
	Let 
	\(C_\theta =\sum_{i=1}^S a_i p_{\theta i}\).
	Then,
	\[
	\nabla_\theta \log C_\theta =
		\frac{S_1S_4-S_2S_3}{S_2S_4}, 
	\]
	where 
	\(S_1=\sum_i a_i ^2(1+(q-1)\theta a_i)^{\frac{2-q}{q-1}}, S_2=\sum_i a_i(1+(q-1)\theta a_i)^\frac{1}{q-1},
	S_3=\sum_i   a_i(1+(q-1)\theta a_i)^\frac{2-q}{q-1} , S_4=\sum_i  (1+(q-1)\theta a_i)^\frac{1}{q-1}.
	\)
	Hence,
	\[
	{S_1S_4-S_2S_3}=\sum_{i,j}(1+(q-1)\theta a_i)^{\frac{2-q}{q-1}} (1+(q-1)\theta a_j)^\frac{2-q}{q-1}
	(a_i^2-a_ia_j)
	\]
	If $\tilde p_i =(1+(q-1)\theta a_i)^{\frac{2-q}{q-1}}/\sum_{j}(1+(q-1)\theta a_j)^{\frac{2-q}{q-1}}$, then
	\[
	{S_1S_4-S_2S_3} \propto \sum_i a_i^2 \tilde p_i-\Big(\sum_i a_i \tilde p_i\Big)^2.
	\]
	This implies ${S_1S_4-S_2S_3}\geq0$, or equivalently $\nabla_\theta \log C_\theta\geq0$, and hence
	$C_\theta$ is monotone increasing in $\theta$.
	Therefore, $C_{\bms a}\leq C_\theta\leq a_{\max}$ noting $\lim_{\theta\rightarrow
-\infty}C_\theta =C_{\bms a}$.
	This ensures that $(q-1)\theta$ is in $(-1/a_{\max},\infty)$ such that $(\hat p_{\theta i})$ satisfies the linear constraint. 
	
	\noindent
	Case 2:  $q> 1$.
	Similarly, by the monotonicity argument, the Lagrangian is the same as $\mathcal L$, and we can observe the similar argument for the KKT conditions.
	If $p_i>0$, then $\nu_i=0$, and thus $ p_i^{q-1}=\mu+\lambda a_i$.
	If $p_i=0$,  then the stationarity condition yields $\nu_i=-(\mu+\lambda a_i) \geq0$.
	This yields the equilibrium distribution is given by $\hat p_{\theta i}$ in \eqref{hat-p2},  
	which satisfies the KKT conditions.
	Since $C_\theta$ is monotone increasing, $C_\theta$ coverges to $C_{\bms a}$ as $theta$ goes to $\infty$.
	Therefore, $a_{\min}\leq C(\theta)\leq C_{\bms a}$, which ensures  the existence $\theta$ in $(-1/a_{\min},\infty)$ such that $(\hat p_{\theta i})$ satisfies the linear constraint. 
	The proof is complete.
\end{proof}
We note that, if $q=1$, then the equilibrium is well-known as the maximum entropy distribution:
\[
\hat p_{\theta i} = \frac{ \exp(\theta a_i)}{\sum_{j=1}^S \exp(\theta a_j)} \quad (i=1,...,S),
\]
in which the equilibrium distribution can be defined for any $C$ of the fully feasible interval $ (a_{\min}, a_{\max})$.  
This is referred to as  a softmax function that is widely utilized in the field of Machine Learning.
It also closely related to MaxEnt model that is important as a species distribution model,
see \cite{phillips2004}.

To illustrate the application of Proposition~\ref{prop2}, we consider a community with $101$ species, where the trait $a_i$ for species $i$ is given by 
$$
a_i=3\exp\{-(i-20)^2/100\}+\exp\{-(i-80)^2/100\} \quad (i=1,...,101).
$$
This trait distribution creates two peaks, representing species with higher ecological advantages. 
We compute the maximum ${}^q\!D$ distributions under this linear constraint $\sum_{i=1}^{101}a_i p_i=C$ {with $C=0.2$ and $C=0.5$ for $q=2$ and $q=0.7$, respectively.} 
The results  are shown in Fig. \ref{Max-hill-D}.  
The left panel of Fig. \ref{Max-hill-D}displays the trait values $a_i$ against species $i$.
The right panel shows  the optimized species distribution for $q=2$ (squares) and $q=0.7$ (circles).     
For $q=2$, the distribution  assigns zero probability to species with indices between 
$i=11$ and $i=30$, indicating that these species are excluded from the optimal distribution due to the emphasis on  common species.
For $q=0.7$,  the distribution includes all species with non-zero probabilities, promoting evenness across the species. 
Indeed, for $q=0.7$, the Hill number is more sensitive to rare species, resulting in a more uniform distribution of probabilities among the species.
We note that this numerical study can be conducted by solving only a one-parameter equation for $\theta$ to satisfy the linear constraint thanks to the result of Proposition \ref{prop2}.  
Thus, the optimization problem with 101 variables is drastically reduced to such a simplified one.

\begin{figure}[htbp]
	\centering
	\includegraphics[
width=130mm]{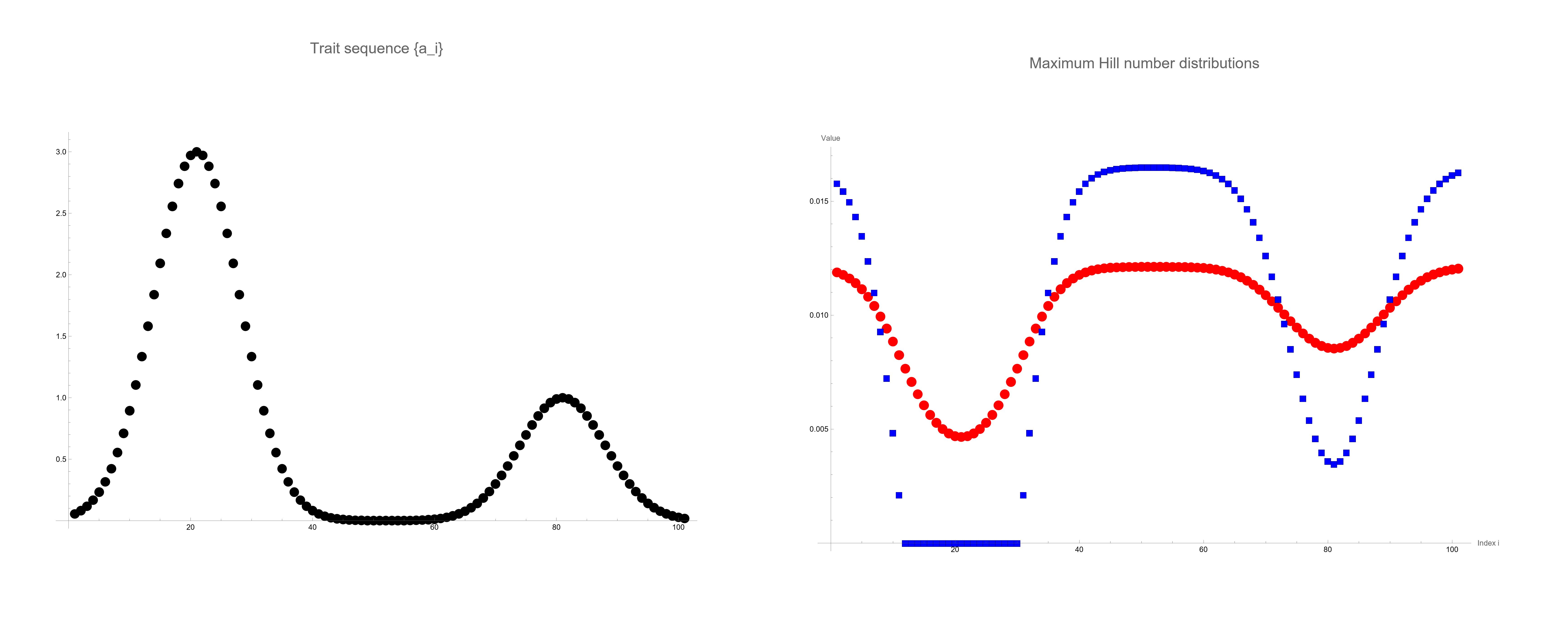}
	\caption{The Max Hill number distributions}
	\label{Max-hill-D}
\end{figure}
We discuss ecological meaning for $C_{\bms a}$, which represents a critical value in the linear constraint.  
If $C$ satisfy $C<C_{\bms a}$ for a case of $q<1$, then the optimal distribution becomes  degenerate,  meaning that it assigns zero probability to some species.   
This occurs because the resource constraint  $C$ is too restrictive to allow for a distribution where all species have positive abundances. 
Ecologically, this reflects situations where only species with certain trait values $a_i$ can survive under the given constraints, leading to competitive exclusion and reduced diversity.
Thus, the ecosystem cannot sustain all species at non-zero abundances without violating resource constraints.
Fig. \ref{Max-hillC} gives a 3-D plot of the Hill number ${}^q\!D({\bm p})$ against $\bm p$ of a $2$-dimensional simplex with $q=1$. In the surface, the curve $\{\hat{\bm p}_\theta\}_\theta$ has the centered point that attains the global maximum $3$ of ${}^q\!D({\bm p})$.
\begin{figure}[htbp]
	\begin{center}
		\includegraphics[
width=100mm]{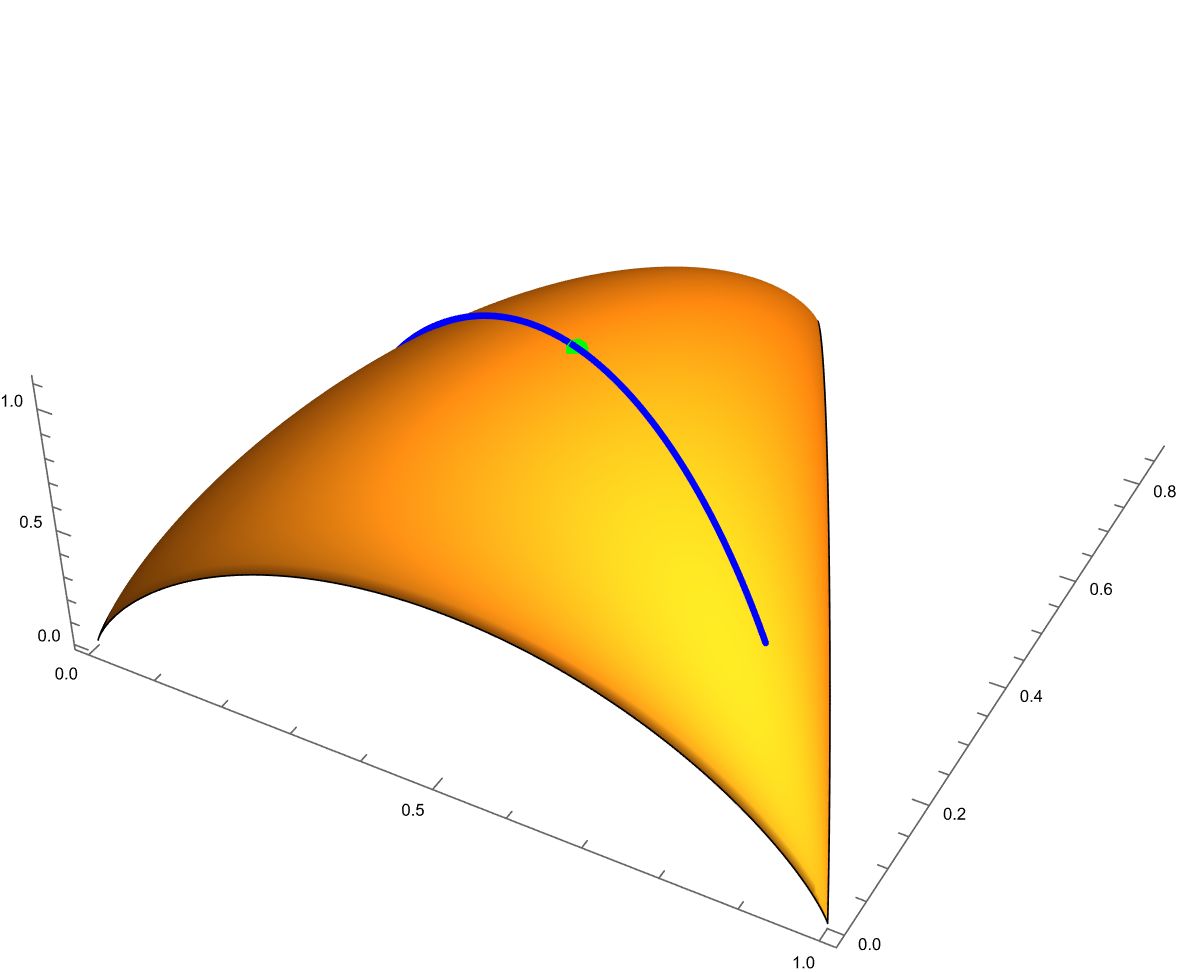}
	\end{center}
	\vspace{-2mm}\caption{Plots of the Max Hill number}
	\label{Max-hillC}
\end{figure}
Fig. \ref{Max-hillB} gives a contour plot of the Hill number ${}^q\!D({\bm p})$ against $p$ of a $2$-dimensional simplex with $q=1$. In the simplex, the curve represents the model $\{{\bm p}_\theta\}_\theta$ intersects the line defined by the linear constrain at the point with the conditional maximum; the centered point attains the global maximum $3$ of ${}^q\!D({\bm p})$.
\begin{figure}[htbp]
	\begin{center}
		\includegraphics[
width=100mm]{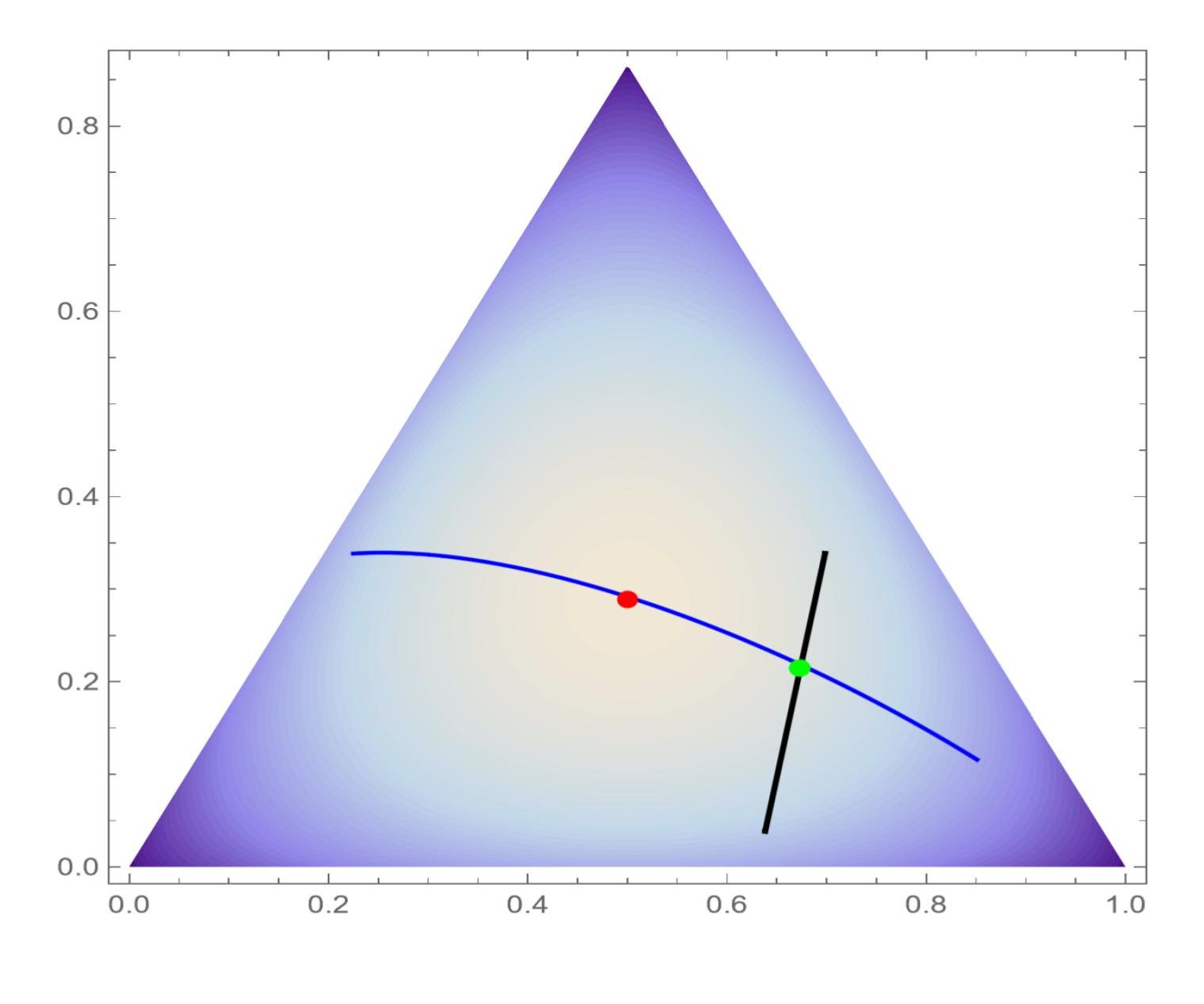}
	\end{center}
	\vspace{-2mm}\caption{Plots of the Max Hill number}
	\label{Max-hillB}
\end{figure}

On the other hand, Rao's quadratic entropy stands out for its ability to incorporate not only species abundance but also functional and genetic dissimilarities.
Let us look into the distribution maximizing Rao's quadratic entropy.

\

\begin{proposition}\label{prop3}
	Assume that $\bm W$ is invertible and that $\bm W^{-1}{\bf 1}>{\bf0}$ in Rao's quadratic entropy ${\rm Q}({\bm p})={\bm p}^\top {\bm W} {\bm p}$, where ${\mathbf 1}$ denotes a copy vector  of $1$'s.  
	Then, ${\rm Q}({\bm p})$ is maximized at
	\begin{align}\label{p-Q}
		\hat {\bm p}_0^{({\rm Q})}=\frac{{ \bm W}^{-1}{\mathbf 1} }{{\mathbf 1}^\top  {\bm W}^{-1}{\mathbf 1} }.
	\end{align}
\end{proposition}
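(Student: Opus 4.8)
Since $\Delta_{S-1}$ is compact and ${\rm Q}$ is continuous, a maximizer exists, and the plan is to pin it down by the Karush--Kuhn--Tucker conditions, exactly as in the proof of Proposition~\ref{prop2}. Form the Lagrangian $\mathcal L = {\bm p}^\top{\bm W}{\bm p} - \mu({\mathbf 1}^\top{\bm p}-1) + \sum_{i=1}^S \nu_i p_i$ with multipliers $\mu$ and $\nu_i\ge 0$. Stationarity gives $2{\bm W}{\bm p} = \mu{\mathbf 1} - {\bm \nu}$, and I would first pursue the interior branch ${\bm \nu}={\mathbf 0}$: then ${\bm p} = \frac{\mu}{2}{\bm W}^{-1}{\mathbf 1}$, and $\sum_i p_i = 1$ forces $\frac{\mu}{2} = ({\mathbf 1}^\top{\bm W}^{-1}{\mathbf 1})^{-1}$, which is well defined and positive because ${\bm W}^{-1}{\mathbf 1}>{\mathbf 0}$ makes ${\mathbf 1}^\top{\bm W}^{-1}{\mathbf 1}>0$. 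This produces the claimed point $\hat{\bm p}_0^{({\rm Q})}$, and the same sign hypothesis places it in the relative interior of $\Delta_{S-1}$, so the nonnegativity and complementary-slackness conditions hold automatically; hence $\hat{\bm p}_0^{({\rm Q})}$ is a genuine KKT point.

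The second step is to promote ``KKT point'' to ``global maximizer''. The cleanest route is an exact expansion on the affine hyperplane $\{{\mathbf 1}^\top{\bm p}=1\}$: writing a feasible ${\bm p}$ as ${\bm p}=\hat{\bm p}_0^{({\rm Q})}+{\bm v}$ with ${\mathbf 1}^\top{\bm v}=0$,
\[
{\rm Q}({\bm p}) = {\rm Q}(\hat{\bm p}_0^{({\rm Q})}) + 2{\bm v}^\top{\bm W}\hat{\bm p}_0^{({\rm Q})} + {\bm v}^\top{\bm W}{\bm v}.
\]
Because ${\bm W}\hat{\bm p}_0^{({\rm Q})} = {\mathbf 1}/({\mathbf 1}^\top{\bm W}^{-1}{\mathbf 1})$, the cross term equals $2({\mathbf 1}^\top{\bm v})/({\mathbf 1}^\top{\bm W}^{-1}{\mathbf 1})=0$, so
\[
{\rm Q}({\bm p}) - {\rm Q}(\hat{\bm p}_0^{({\rm Q})}) = {\bm v}^\top{\bm W}{\bm v} \qquad\text{for every } {\bm v}\perp{\mathbf 1},
\]
and the proposition reduces to the claim that ${\bm v}^\top{\bm W}{\bm v}\le 0$ whenever ${\mathbf 1}^\top{\bm v}=0$.

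That reduction is the crux, and I expect it to be the real obstacle: negative semidefiniteness of ${\bm W}$ on ${\mathbf 1}^\perp$ (conditional negative definiteness of ${\bm W}$) is \emph{not} a consequence of invertibility together with ${\bm W}^{-1}{\mathbf 1}>{\mathbf 0}$ alone --- for instance the adjacency matrix of a $6$-cycle satisfies both but carries the eigenvalue $1$ on an eigenvector transverse to ${\mathbf 1}$, so there $\hat{\bm p}_0^{({\rm Q})}$ is only a saddle point of ${\rm Q}$. It does hold for the dissimilarity matrices used in practice --- those induced by a metric or Euclidean embedding of the species, and in particular whenever ${\bm W}$ has the form $W_{ij}=1-Z_{ij}$ for a positive-semidefinite similarity matrix ${\bm Z}$ with unit diagonal, since then ${\bm v}^\top{\bm W}{\bm v}=({\mathbf 1}^\top{\bm v})^2-{\bm v}^\top{\bm Z}{\bm v}=-{\bm v}^\top{\bm Z}{\bm v}\le 0$ on ${\mathbf 1}^\perp$. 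So I would add conditional negative definiteness of ${\bm W}$ as a standing hypothesis (or record it in the statement), carry out the KKT computation above for the stationary point, and then invoke the hyperplane expansion to conclude that $\hat{\bm p}_0^{({\rm Q})}$ is the global maximizer, unique on the faces where the form ${\bm v}^\top{\bm W}{\bm v}$ is strictly negative.
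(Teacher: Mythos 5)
Your first paragraph is, in substance, the paper's entire proof: the paper forms the same Lagrangian (up to a harmless factor $\tfrac12$ and the sign convention on $\bm\nu$), obtains $\bm W\bm p=\mu{\bf 1}$ from stationarity, normalizes to get \eqref{p-Q}, and concludes as soon as $\bm W^{-1}{\bf 1}>{\bf 0}$ guarantees feasibility --- that is, it stops exactly where your second step begins. The hyperplane expansion and the reduction to $\bm v^\top\bm W\bm v\le 0$ on ${\bf 1}^\perp$ appear nowhere in the paper, and your diagnosis is correct: for a quadratic objective that is not concave on the simplex, satisfying the first-order KKT conditions does not certify a maximum. Your $6$-cycle adjacency matrix is a valid witness under the paper's stated conditions on a dissimilarity matrix: it has zero diagonal and nonnegative entries, its spectrum $\{2,1,1,-1,-1,-2\}$ contains no zero so it is invertible, and its row sums equal $2$ so $\bm W^{-1}{\bf 1}=\tfrac12{\bf 1}>{\bf 0}$; yet ${\rm Q}({\bf 1}/6)=\tfrac13$ while ${\rm Q}$ at $(\tfrac12,\tfrac12,0,0,0,0)$ equals $\tfrac12$, so \eqref{p-Q} is only a saddle point there. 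Hence the proposition as stated needs the extra hypothesis you identify --- conditional negative semidefiniteness of $\bm W$ on ${\bf 1}^\perp$, the classical condition under which Rao's ${\rm Q}$ is concave on $\Delta_{S-1}$ and one that holds for the paper's subsequent examples such as $\bm W={\bf 1}{\bf 1}^\top-{\bf I}$ and for $\bm W={\bf 1}{\bf 1}^\top-\bm Z$ with $\bm Z$ positive semidefinite --- and with it your two-step argument (KKT point plus exact expansion on the affine hyperplane) is a complete proof of global maximality, which the paper never actually establishes.

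One small remark: your expansion tacitly uses symmetry of $\bm W$ when writing the cross term as $2\bm v^\top\bm W\hat{\bm p}_0^{({\rm Q})}$ and invoking $\bm W\hat{\bm p}_0^{({\rm Q})}\propto{\bf 1}$; the paper never states symmetry (its worked example $\bm W_1$ is in fact not symmetric), so you should either assume it or replace $\bm W$ by $\tfrac12(\bm W+\bm W^\top)$, which leaves ${\rm Q}$ unchanged but alters the formula \eqref{p-Q} in the nonsymmetric case. With that caveat, nothing is missing from your argument; the gap you flag lies in the paper's statement and proof, not in your proposal.
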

\begin{proof}
	We introduce the Lagrangian function with the multipliers \( \mu \) and $\bm\nu$ for $\bm p$ to be in 
	$\Delta_{S-1}$:
	\[
	\mathcal{L} = \frac{1}{2}{\bm p}^\top \bm W{\bm p} - \mu \left( {\mathbf 1}^\top {\bm p} - 1 \right)-\bm\nu^\top \bm{p}.
	\]
	The  stationarity condition yields ${\bm W}{\bm p}-\mu {\bf 1}+\bm\nu={\bf0}$.
	If $p_i=0$, then $\mu_i=0$, and we have $({\bm W}\bm p)_i=\mu.$
	Therefore, $\bm W\bm p=\mu{\bf1}.$
	Solving for $\bm p$, we get 
	$
	\bm p={\bm W^{-1}{\bf1}}/({{\bf1}^\top \bm W^{-1}{\bf1}}).
	$
	The positivity of $\bm W^{-1}{\bf1}$ ensures that $\bm p\geq0$ and the KKT conditions are satisfied.
\end{proof}
We discuss a case where entries of $\bm W^{-1}{\bf 1}$ have positive and negative signs.
The optimal distribution would be in the boundary of $\Delta_{S-1}$, however any closed form is not solved here.
For example, consider the following examples of a dissimilarity matrix:
\[
\bm W_1=\begin{bmatrix}
	0&1& 2& 3\\
	1& 0 &3& 2\\
	2& 3& 0& 3\\
	3& 2 &2& 0
\end{bmatrix},
\quad
\bm W_2=\begin{bmatrix}
	0&1& 2& 3\\
	1& 0 &2& 1\\
	2& 2& 0& 1\\
	3& 1 &1& 0
\end{bmatrix}.
\]
Then, we find  $\bm W_1^{-1}{\bf 1}=(\frac{3}{23},\frac{3}{23},\frac{4}{23},\frac{4}{23})^\top$, which satisfies the assumption Proposition \ref{prop3}.   The maximum of Rao's entropy defined by $\bm W_1$  is $\frac{14}{23}$ at the optimal distribution $(\frac{3}{14},\frac{3}{14},\frac{2}{7},\frac{2}{7})^\top$.
On the other hand,  $\bm W_2^{-1}{\bf 1}=(-2,4,3,-3)^\top$, which violates the assumption Proposition \ref{prop3}. 
Indeed, the optimal distribution is given by $(\frac{1}{2},0,0,\frac{1}{2})^\top$ in the boundary, at which Roa's entropy has a maximum $\frac{3}{2}$.
These species with zero probabilities are effectively excluded from the ecosystem's diversity under the entropy maximization principle, meaning they "cannot survive" within the model's framework.

If all species are equally dissimilar, then the global maximum distribution $\hat{p}_0^{(\rm Q)}$ equals the even distribution $\bm e$. 
In effect, consider the case where ${\bm W}= {\bf1}{\bf1}^\top - {\bf I}$, where $\bf I$ denotes the identity matrix.
Then, $
{\bm W}^{-1}=\frac{1}{S-1}{\bf1}{\bf1}^\top-{\bf I},
$ 
and hence, $\hat{\bm p}_0^{\rm (Q)}$ equals the even distribution ${\bm e}$.
This is the same as the maximizer of ${}^qD({\bm p})$ noting that Rao's quadratic entropy is reduced to the Gini-Simpson index.
However, in most realistic scenarios, species differ in their functional or genetic traits, leading to variations in dissimilarities. Therefore, $\hat{\bm p}_0^{(\rm Q)}$ typically differs from $\bm e$, giving higher probabilities to species more dissimilar to others.
Next, consider  another simple case where
\(
{\bm W}_{\bms\pi}= {\bm \pi}{\bm\pi}^\top - {\rm diag}({\bm \pi}^{2})
\)
for a fixed $\bm\pi$ of $\Delta_{S-1}$, where $\rm diag$ denotes the diagonal matrix.
Thus, 
$$
{\bm W}_{\bms\pi}^{-1}=\frac{1}{S-1}{\bm\pi}^{-1}({\bm\pi}^{-1})^\top -{\rm diag}({\bm \pi}^{-2}),
$$ 
and hence $({\bm W}_{\bms\pi}^{-1}{\bf1})_i=\frac{{\bf 1}^\top \bms{\pi}^{-1}}{S-1}{\pi_i}^{-1} -{ \pi_i}^{-2}$.
This yields that, if $\pi_i > \frac{S-1}{{\bf 1}^\top \bms{\pi}^{-1}}$, then $({\bm W}_{\bms\pi}^{-1}{\bf1})_i <0$, or the probability of species $i$ is a zero in the maximum distribution.
Further, assume $\bm\pi=(\alpha \tilde{\bf1},\beta \tilde{\bf1})/m(\alpha+\beta)$, where $S=2m$ and $\tilde{\bf1}$ is the one-copy vector of $m$ dimension.   Then, a straightforward calculus yields that, if 
$$
\Big|\frac{\alpha}{\alpha+\beta}-\frac{1}{2}\Big|\leq \frac{1}{2(2m-1)},
$$
then $\hat{\bm p}_0^{(\rm Q)}$ defined in \eqref{p-Q} is in $\Delta_{S-1}$.
This situation is quite near the case of $\alpha=\beta$, or the equally-dissimilar case $\bm W$.
In general, it is difficult to get a closed form of the maximum distribution without the assumption ${\bm W}^{-1}{\bf1}>{\bf0}$. 
However, we can use efficient algorithms to find numerically the maximizer, in which the optimization problem is reduced a a standard quadratic programming problem.  
Fast solvers like interior-point methods, active set methods, or alternating direction method of multipliers can handle this formulation. 
Tools such as quadprog in R and Python or specialized libraries like Gurobi or CVXPY provide robust implementations for high-dimensional problems.
See \cite{dostal2009optimal} for quadratic programming algorithms. 

Next, consider a linear constraint: ${\bm a}^\top {\bm p}=C$, where ${\bm a}\geq\bf{0}$ and $C$ are fixed constants.
Then,  under the linear constraint,  
Similarly, we introduce the Lagrangian function:
\[
\frac{1}{2}{\rm Q}({\bm p})+  \mu \left( {\mathbf 1}^\top {\bm p} - 1 \right) +\lambda\left( {\bm a}^\top {\bm p} - C \right).
\]
The  stationarity equation is given by
\(
{\bm W}{\bm p} + \lambda {\bm a} + \mu {\mathbf 1} = {\bf 0},
\)
which yields the solution form:
\begin{align}\label{maximizerR}
	\hat {\bm p}_\theta^{({\rm Q})}= \frac{{\bm W}^{-1}({\bf 1}+\theta{\bm a})}{{\bf1}^\top({\bm W}^{-1}({\bf 1}+\theta{\bm a}))}.
\end{align}
Here $\theta$ is determined by the linear constraint, so that
\[
\theta =\frac{(C{\bf1}-\bm{a})^\top{\bm W}^{-1}{\bf1}}{(C{\bf1}-\bm{a})^\top{\bm W}^{-1}{\bm a}}.
\]
If we assume ${\bm W}^{-1}({\bf 1}+\theta{\bm a})>0$, then  $\hat {\bm p}_\theta^{({\rm Q})}$ is properly the distribution maximizing Rao's entropy due to the discussion similar to the proof of Proposition \ref{prop3}.

Let us examine the maximum Leinster-Cobbold index distribution.
This index is integrated the Hill number with Rao's quadratic entropy, providing a unified framework for measuring biodiversity that accounts for both species abundances and similarities..
We can find the maximizing distribution by combining arguments in Propositions \ref{prop2} and \ref{prop3}.

\begin{proposition}\label{prop4}
	Assume that $\bm Z$ is invertible and that $\bm{Z}^{-1}\bf{1}>\bf{0}$ in  the Leinster-Cobbold index:
	\[
	{}^q\!D_{\bms Z}({\bm p}) = \left( {\bm p}^\top ({\bm Z}{\bm p})^{q-1} \right)^{\frac{1}{1 - q}},
	\]
	where  ${\bm y}^x$ is defined as the vector $(y_i^x)$ for a vector ${\bm y}=(y_i)$.
	Then, the  distribution maximizing ${}^q\!D_{\bms Z}({\bm p})$ in $\bm p$ is given by 
	\begin{align}\label{maxLC}
		\hat{\bm p}_{\rm LC}^{(q)}=\bm Z^{-1}{\bf1}/({\bf1}^\top\bm Z{\bf1}).
	\end{align}
	
\end{proposition}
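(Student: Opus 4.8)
The plan is to mirror the Lagrangian/KKT argument used for Propositions~\ref{prop2} and~\ref{prop3}, carrying the extra bookkeeping forced by the componentwise power $(\bm Z\bm p)^{q-1}$. First I would reduce the problem: since $t\mapsto t^{1/(1-q)}$ is monotone on $(0,\infty)$ --- increasing for $q<1$, decreasing for $q>1$ --- maximizing ${}^q\!D_{\bm Z}(\bm p)$ over $\Delta_{S-1}$ is equivalent to maximizing, when $q<1$, or minimizing, when $q>1$, the scalar functional
\[
f(\bm p)=\bm p^\top(\bm Z\bm p)^{q-1}=\sum_{i=1}^S p_i\,(\bm Z\bm p)_i^{q-1}.
\]
The stationarity conditions coincide in the two regimes, so both can be handled together, exactly as the $q<1$ and $q>1$ cases share a single Lagrangian in Proposition~\ref{prop2}.

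Next I would set up $\mathcal L=\frac1q f(\bm p)-\mu(\mathbf 1^\top\bm p-1)-\bm\nu^\top\bm p$ with $\mu\in\mathbb{R}$ and $\bm\nu\geq\mathbf 0$, and differentiate. The one genuine computation is
\[
\frac{\partial f}{\partial p_k}=(\bm Z\bm p)_k^{q-1}+(q-1)\big(\bm Z^\top[\bm p\odot(\bm Z\bm p)^{q-2}]\big)_k,
\]
where $\odot$ denotes the Hadamard product; this is exactly where the quadratic-form simplicity of Proposition~\ref{prop3} is lost, since $f$ is no longer quadratic in $\bm p$. I would then substitute the candidate $\hat{\bm p}=\bm Z^{-1}\mathbf 1/(\mathbf 1^\top\bm Z^{-1}\mathbf 1)$, for which $\bm Z\hat{\bm p}=c\,\mathbf 1$ with $c=1/(\mathbf 1^\top\bm Z^{-1}\mathbf 1)>0$. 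Using the symmetry $\bm Z^\top=\bm Z$ of the similarity matrix (natural, since $Z_{ij}$ is the similarity of species $i$ and $j$; a non-symmetric $\bm Z$ would make $\hat{\bm p}$ fail to be a critical point), both terms collapse to multiples of $\mathbf 1$: $(\bm Z\hat{\bm p})^{q-1}=c^{q-1}\mathbf 1$ and $\bm Z^\top[\hat{\bm p}\odot(\bm Z\hat{\bm p})^{q-2}]=c^{q-2}\,\bm Z\hat{\bm p}=c^{q-1}\mathbf 1$, so $\nabla f(\hat{\bm p})=q\,c^{q-1}\mathbf 1$. Since the hypothesis $\bm Z^{-1}\mathbf 1>\mathbf 0$ makes every $\hat p_i$ strictly positive, all nonnegativity constraints are inactive, complementary slackness forces $\bm\nu=\mathbf 0$, and stationarity holds with $\mu=c^{q-1}$; together with $\hat{\bm p}\in\Delta_{S-1}$ this verifies the KKT conditions, just as in Proposition~\ref{prop3}. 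I would also flag that the normalizing constant in \eqref{maxLC} should read $\mathbf 1^\top\bm Z^{-1}\mathbf 1$, not $\mathbf 1^\top\bm Z\mathbf 1$, for $\hat{\bm p}_{\rm LC}^{(q)}$ to be a probability vector, and that the maximal value is then ${}^q\!D_{\bm Z}(\hat{\bm p})=\mathbf 1^\top\bm Z^{-1}\mathbf 1$, independent of $q$.

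The hard part will be upgrading ``$\hat{\bm p}$ is a KKT point'' to ``$\hat{\bm p}$ is the global maximizer'', because KKT is only a necessary condition. Propositions~\ref{prop2} and~\ref{prop3} avoid this because the objective there is a power sum or a quadratic form with an obvious convexity type; here one would want $f$ concave on $\Delta_{S-1}$ for $q<1$ and convex for $q>1$. This is clean for $q=2$: $f(\bm p)=\bm p^\top\bm Z\bm p$ is a convex quadratic when $\bm Z$ is positive semidefinite, so its minimizer over the simplex at the interior critical point $\hat{\bm p}$ is genuinely global, which already gives ${}^2\!D_{\bm Z}(\hat{\bm p})=\mathbf 1^\top\bm Z^{-1}\mathbf 1$. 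For general $q$ I would either invoke a positivity/conditional-negative-definiteness hypothesis on $\bm Z$ that endows $\bm p\mapsto\bm p^\top(\bm Z\bm p)^{q-1}$ with the right convexity, or establish a direct Leinster--Cobbold-type inequality ${}^q\!D_{\bm Z}(\bm p)\leq\mathbf 1^\top\bm Z^{-1}\mathbf 1$ (the same $\hat{\bm p}$ being known to be a simultaneous maximizer across $q$), or, consistent with the level of rigor of the surrounding propositions, simply record that $\hat{\bm p}$ satisfies the KKT conditions. A secondary point, parallel to the mixed-sign $\bm W^{-1}\mathbf 1$ discussion after Proposition~\ref{prop3}, is that dropping $\bm Z^{-1}\mathbf 1>\mathbf 0$ pushes the optimum to the boundary of $\Delta_{S-1}$ with no closed form; the hypothesis is exactly what keeps the maximizer interior.
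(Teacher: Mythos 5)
Your proposal follows essentially the same route as the paper's own proof: form the Lagrangian for ${\bm p}^\top({\bm Z}{\bm p})^{q-1}$ on the simplex, compute the stationarity condition involving $(\bm Z\bm p)^{q-1}+(q-1)\bm Z[\bm p\odot(\bm Z\bm p)^{q-2}]$, substitute the candidate proportional to $\bm Z^{-1}\mathbf 1$, and verify the KKT conditions using $\bm Z^{-1}\mathbf 1>\mathbf 0$ to force $\bm\nu=\mathbf 0$. Your extra observations are sound refinements rather than a different method: the normalizer in \eqref{maxLC} should indeed be $\mathbf 1^\top\bm Z^{-1}\mathbf 1$ (a typo in the paper), the symmetry of $\bm Z$ is tacitly used, the $q<1$/$q>1$ monotone reduction is glossed over in the paper, and the paper likewise stops at verifying a KKT point without establishing global optimality.
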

\begin{proof}
	The Lagrangian function with the multipliers \( \mu \) and  \( \nu \) is given by
	\[
	{\bm p}^\top ({\bm Z}{\bm p})^{q-1}- \mu \left( {\mathbf 1}^\top {\bm p} - 1 \right) +
	{\bm \nu}^\top {\bm p},
	\]
	which leads to the  stationarity equation, 
	\[
	({\bm {Z p}})^{q-1}+(q-1) {\bm Z}(\bm{p}\odot \{({\bm {Z p}})^{q-2}\}-\mu{\bf 1}+{\bm\nu}={\bf0},
	\]
	where $\odot$ denotes the Hadamar product.
	If $\bm{p}>{\bf0}$, then $\bm \nu={\bf0}$ by the complementary slackness.
	Let $\bm{p}=\sigma\bm{Z}^{-1}{\bf 1}$, where $\sigma>0$.  Then, the  stationarity  equation becomes
	\[
	\sigma^{q-1}{\bf1}+(q-1) \sigma^{q-1} {\bf1}-\mu{\bf 1}={\bf0},
	\]
	which implies $\bm{p}\propto\bm{Z}^{-1}{\bf 1}$ is a candidate of the solution.
	Therefore, the  maximum distribution is given by \eqref{maxLC}.
	
\end{proof}
It is worthwhile to note that $\hat{\bm p}_{\rm LC}^{(q)}$ is independent of $q$, and is the same form as the maximum Rao's entropy distribution $\hat{\bm p}_0^{\rm (Q)}$ defined in \eqref{p-Q}.
The distribution maximizing  under a linear constraint ${\bm a}^\top {\bm p}=C$ 
is given by
\begin{align}\label{maximizer2}
	\hat {\bm p}_\theta^{({\rm LC})}=\frac{{\bm Z}^{-1}\big({\bf1}+(q-1)\theta{\bm a}\big)^\frac{1}{q-1}}{{\bf1}^\top{\bm Z}^{-1}\big({\bf1}+(q-1)\theta{\bm a}\big)^\frac{1}{q-1}},
\end{align}
where $\theta$ is determined by $C$.
We confirm that, if $q=2$, then $\hat {\bm p}_\theta^{({\rm LC})}$ is reduced to $\hat {\bm p}_\theta^{({\rm Q})}$ defined in \eqref{maximizerR} by a re-parameterized $\theta$; if ${\bm Z}={\bf I}$, then this is reduced to $\hat{\bm p}_\theta$ in \eqref{hat-p}.
Thus, this effectively integrates both maximum distributions.
Then, the stationarity condition is written by
\[
q(\bm {Z p})^{q-1} + \lambda {\bm a} + \mu {\mathbf 1} = 0.
\]
This gives $(\bm {Z p})^{q-1} \propto  {\mathbf 1}+(q-1)\theta{\bm a}$, which concludes the solution form \eqref{maximizer2}. 

The Leinster-Cobbold index serves as a unifying framework that encompasses both the Hill numbers and Rao's quadratic entropy.
The inverse similarity matrix ${\bm Z}^{-1}$
adjusts species abundances based on their functional or phylogenetic relationships.
High $q$ values focus on dominant species or those with higher contributions to similarity; low $q$ values emphasize rare species or those with unique traits.

The maximum diversity distribution derived under the linear constraint \({\bm a}^\top {\bm p} =C\) has ecological interpretations that bridge mathematical optimization and real-world ecological dynamics.
The coefficients \(a_i\) represent species-specific traits or environmental tolerances that directly influence each species' ability to thrive under certain ecological conditions. For instance, in ecosystems where a specific resource is limited, \(a_i\) could represent each species' efficiency in utilizing that resource. Species with higher \(a_i\) values are more efficient and thus have a competitive advantage, leading to higher abundances \(p_i\).
Maximizing the Leinster-Cobbold index under this constraint does not lead to perfect evenness, which is rarely observed in natural ecosystems. Instead, it yields a distribution that balances diversity with ecological realism, reflecting the natural dominance of certain species due to their advantageous traits. This mathematical optimization highlights the trade-offs between achieving maximum diversity and adhering to ecological constraints, demonstrating that the most diverse community under a given constraint proportionally represents species according to their ecological roles and advantages.
This approach offers a quantitative framework to model and analyze community structures under various ecological scenarios. By adjusting \(a_i\) and \(c\), ecologists can simulate different environmental conditions and assess their impact on diversity and species distributions.


\section{Information geometry on a simplex}\label{sect4}

In this section, we explore the general properties of a simplex within the framework of information geometry. 
Information geometry provides a powerful way to study statistical models by viewing them as Riemannian manifolds endowed with the Fisher-Rao metric \cite{rao1945}. 
The Fisher-Rao metric provides the Cramér-Rao lower bound for unbiased estimators.
This perspective allows us to examine geometric structures such as geodesics, divergence measures, and metric tensors, which have profound implications in statistical inference and diversity measurement.

We have a concise review of the information geometric natures on a simplex.
Let $ \Delta_{S-1}$ be a simplex of dimension $S-1$, or
$$
{\Delta_{S-1}}=\big\{{\bm p} \in \mathbb R^S :{\bf1}^\top \bm{ p}=1,\bm p\geq {\bf0}\big\}.
$$
Henceforth, we identify $\Delta_{S-1}$ with the space of all $S$-variate categorical distributions.
The Fisher-Rao metric \(g \) on \( \Delta_{S-1} \) is given by the information matrix:
\begin{align}\label{FR}
	\bm G_{\bms p} = {\rm diag}({\bm p}^{-1}) + \frac{1}{p_S}{\bf1}{\bf1}^\top
\end{align}
where \( p_S = 1 - \sum_{i=1}^{S-1} p_i \).
The Fisher-Rao metric captures the intrinsic geometry of the probability simplex, reflecting the sensitivity of probability distributions to parameter changes.
The metric $g$ induces geodesics that can be expressed using trigonometric functions, specifically the arcsine function.
The Fisher-Rao distance between two points \( \bm \pi \) and \(\bm p \) on the simplex $\Delta_{S-1}$ is given by:
\[
d({\bm \pi}, {\bm p}) = 2 \arccos\left( \sqrt{\bm \pi}^\top\sqrt{\bm p} \right).
\]
The Riemannian geodesic connecting \( {\bm \pi} \) and \( {\bm p} \) under the Fisher-Rao metric can be expressed parametrically using trigonometric functions:
\[
\bm p (t) = \left( \frac{\sin\left( (1 -t) \phi \right)}{\sin \phi} \sqrt{\bm \pi} + \frac{\sin\left( t \phi \right)}{\sin \phi} \sqrt{\bm p} \right)^2
\]
for $t, 0\leq t\leq 1$, where \( \phi \) represents the angle between \( \sqrt{\bm\pi } \) and \( \sqrt{\bm p} \) on the sphere.
This geodesic provides the shortest path between two probability distributions under the Fisher-Rao metric and is characterized by trigonometric functions arising from the geometry of the unit sphere.

In information geometry, two types of affine connections--the mixture and exponential connections--lead to dualistic interpretations between statistical models and inference, see \cite{amari1982,amari-nagaoka2000}, \cite{nielsen2021geodesic}. 
These connections give rise to m-geodesics and e-geodesics:
For any distinct points $\bm \pi$ and $\bm p$ of $\Delta_{S-1}$, the m-geodesic connecting $\bm \pi$ and $\bm p$ is given by
\[
\bm p^{\rm (m)}(t)=(1-t)\bm\pi+t \bm p, \quad t\in[0,1]
\]
the e-geodesic connecting $\bm\pi$ and $\bm p$ is given by
\[
\bm p^{\rm (e)}(t)=\frac{\exp((1-t)\log \bm\pi+t \log \bm  p)}{{\bf1}^\top \exp((1-t)\log \bm\pi +t\log \bm p)}, \quad t\in[0,1],
\]
where $\log \bm p$ denotes the vector of logarithms of $p_i$.
The e-geodesic corresponds to linear interpolation in the natural parameter space of the exponential family.
It is noted that the range of $t$ for  $\bm p_t^{\rm (m)}$  can be extended to a closed interval including $[0,1]$; 
the range of $t$ for $\bm p_t^{\rm (e)}$  can be extended to as $(-\infty,\infty)$, see appendix for detailed discussion for the enlarged interval.
We observe an important example of the mixture geodesic in Rao's quadratic entropy.
In effect, the maximum quadratic entropy distribution in \eqref{maximizerR} under the linear constraint is written as
\begin{align}\nonumber 
	\hat {\bm p}_t^{({\rm Q})}=(1-t)\hat {\bm p}_0^{({\rm Q})}+t {\bm p}_{\bms a}.
\end{align}
This is a mixture geodesic connecting ${\bm p}_0^{({\rm Q})}$ and ${\bm p}_{\bms a}$, where ${\bm p}_{\bms a}={{\bm W}^{-1}{\bm a}}/({{\mathbf 1}^\top {\bm W}^{-1}{\bm a}})$.
Thus, $\hat{\bm p}_t^{({\rm Q})}$ becomes a global maximum at $t=0$; it becomes a constrained maximum
at $t$ determined by the linear constraint. 
If we take $K$ distinct points ${\bm p}_{(1)},...,{\bm p}_{(K)}$, then the e-geodesic family is given by
\[
\bm p ^{\rm (e)}_{\bms t}=Z_{\bms t}^{\rm (e)}{\exp\big(t_1\log\bm  p_{(1)}\cdots+t_K\log \bm p_{(K)}\big)}
\]
where ${\bm t}=(t_i)$ is in a simplex $\Delta_{K-1}$.
In general, such a family ${\mathcal F}^{\rm (e)}=\{{\bm p}^{(\rm e)}_{\bms t}: {\bm t}\in \Delta_{K-1}\}$
is referred to as an exponential family.
By definition, the e-geodesic connecting any points $\bm\pi$ and $\bm p$ of ${\mathcal F}^{\rm (e)}$ is included in ${\mathcal F}^{\rm (e)}$.
Under an assumption of an exponential family, the Cramer-Rao inequality provides insight into the efficiency of estimators, making it clear when estimators are at or near this lower bound.
It is shown that MLEs within the exponential family have optimal properties, often achieving the Cramer-Rao lower bound under regularity conditions, see \cite{rao1961}.
Further, in a curved model in ${\mathcal F}_{\exp}$, \cite{rao1961} gives an insightful proof for the maximum likelihood estimator to be second-order efficient, see also \cite{{efron1975defining},{eguchi1983}}.

As a generalized geodesic, we consider the $q$-geodesic connecting ${\bm \pi}$ with ${\bm p}$ defined by
\[
\bm p^{(q)}_t =Z_t^{(q)}{\{(1-t)\bm \pi^{q-1}+ t \bm p^{q-1}\}^\frac{1}{q-1}}, \quad t\in[0,1],
\]
where $q$ is a fixed real number and $Z_t^{(q)}$ is the normalizing constant, cf. \cite{eguchi2011projective}.
We will observe a close relation to the maximum Hill-number distributions discussed in Section \ref{sect3}.
By definition, the family of the $q$-geodesics includes the mixture geodesic and the exponential geodesic as
\[
\lim_{q\rightarrow0} \bm p^{(q)}_t= \bm p^{\rm (m)}_t,\quad
\lim_{q\rightarrow1} \bm p^{(q)}_t= \bm p ^{\rm (e)}_t.
\]
We note that the feasible range for $t$ can be extended from $[0,1]$ to a closed interval, see Appendix for the exact form.

Similarly, for $K $ distinct points ${\bm p}_{(1)},...,{\bm p}_{(K)}$, then the $q$-geodesic family is given by ${\mathcal F}^{ (q)}=\{\bm p^{\rm (q)}_{\bms t} :  \bm t\in \Delta_{K-1}\}$, where 
\[
\bm p^{\rm (q)}_{\bms t} =Z_{\bms t}^{(q)} \big\{t_1 \bm p_{(1)}^{\ q-1}+\cdots+t_K \bm p_{(K)}^{\ q-1}\big\}^\frac{1}{q-1} . 
\]
The $q$-geodesic connecting between any points $\bm \pi$ and $\bm p$ of ${\mathcal F}^{\rm (q)}$ is included in the $q$-geodesic family.

	Consider a linear constraint: $\bm a^\top \bm p=C$ in $\Delta_{S-1}$. 
	Then, the maximum Hill-number distribution
	\[
	\hat{\bm p}_\theta=\frac{({\bf 1}+(q-1)\theta \bm a)^\frac{1}{q-1}}{{\bf 1}^\top ({\bf 1}+(q-1)\theta \bm a)^\frac{1}{q-1}}
	\] 
	with $q<1$ under the linear constraint lies along  the $q$-geodesic connecting between $\bm e$ and $\bm p_{\bms a}$, where  $\bm e={\bf 1}/S$ and 
	\(
	\bm p_{\bms a} =  {\bm a^\frac{1}{q-1}}/{{\bf 1}^\top \bm a^\frac{1}{q-1}}.
	\) 
Indeed, by the definition of the $q$-geodesic,
\[
\bm p^{(q)}_t =\frac{\big((1-t)\bm e+t \bm p_{\bms a}^{q-1}\big)^\frac{1}{q-1}}
{{\bf 1}^\top\big((1-t)\bm e+t \bm p_{\bms a}^{q-1}\big)^\frac{1}{q-1}} 
\]
which can be written as
\[
\bm p^{(q)}_t =\frac{\big((1-t)T{\bf1}+t \bm a\big)^\frac{1}{q-1}}
{{\bf 1}^\top\big((1-t)T{\bf1}+t \bm a\big)^\frac{1}{q-1}}. 
\]
where $T={({\bf 1}^\top \bm a^\frac{1}{q-1})^{q-1}}/{S^{q-1}}$.
Therefore, $\bm p^{(q)}_t= \hat{\bm p}_\theta$ if $\theta$ is defined by
\(  t/\big((1-t)T(q-1)\big). \)
The $q$-geodesic provides a path of distributions that transition smoothly from maximum evenness 
$\bm e$ to a distribution that fully incorporates the constraint $\bm p_{\bms a}$.

Consider a one-parameter family of probability distributions defined as:
\begin{align}\label{max-hill-model}
	{\mathcal P}_{\bms a}^{(q)}=\{\bm p_\theta^{(q)}=Z_{\theta}^{(q)}{({\bf1}+(q-1)\theta \bm a )^\frac{1}{q-1}}  : \theta\in\Theta\}
\end{align}
and a subset of of $\Delta_{S-1}$ constrained by a linear condition:
\[
{\mathcal H}_{\bms a}(  C)=\big\{{\bm p}\in\Delta_{S-1}:\bm a^\top \bm p = C\big\}.
\]
Geometrically, this implies that the simplex $\Delta_{S-1}$ can be foliated into a family of hyperplanes 
$\{{\mathcal H}_{\bms a}(C)\}$, where each hyperplane intersects the one-parameter family ${\mathcal P}_{\bms a}^{(q)}$ at a single point:
$$
{\mathcal H}_{\bms a}(C) \cap {\mathcal P}_{\bms a}^{(q)} = \{\bm p_{\theta^*}^{(q)}\},
$$
where $\theta^*$ satisfies the condition $\bm a^\top \bm p_{\theta^*}^{(q)} = C$.
This foliation provides a geometric perspective on the simplex: each hyperplane ${\mathcal H}_{\bms a}(C)$ forms a "leaf," and the distribution of these leaves varies smoothly over $\Delta_{S-1}$. 
At every point within a leaf, the tangent space of the leaf aligns with a smoothly varying distribution over the simplex.
Fig. \ref{Max-hill-A} gives a 3D plot of the foliation in a $3$-dimensional simplex $\Delta_3$.
We observe that the curve represents the ray ${\mathcal P}_{\bms a}$ that intersects the  leaves ${\mathcal H}_{\bms a}(C)$'s  at the conditional maximum points, in which the leaves are parallel to each other. 
\begin{figure}[htbp]
	\begin{center}
		\hspace{10mm}
		\includegraphics[
width=100mm]{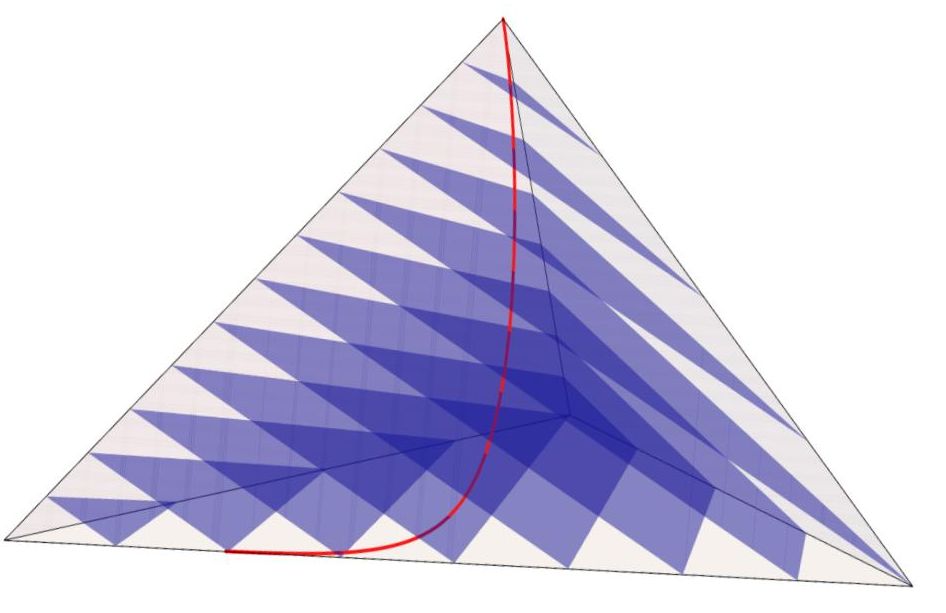}
	\end{center}
	\vspace{-2mm}\caption{Plots of the foliation in $\Delta_3$.}
	\label{Max-hill-A}
\end{figure}


The concepts for divergence have been well discussed in statistics, see \cite{{burbea1982convexity}, {basu1998robust}, {eguchi2022minimum}}.
The $\gamma$-power cross-entropy between community distributions $\bm \pi $ and $\bm p$ can be given by
\[
H_\gamma({\bm \pi},{\bm p}) = -\frac{1}{\gamma}\frac{\!\!\!\!\bm\pi^\top \bm{ p}^\gamma}
{\big({\bf1}^\top\bm p^{\gamma+1}\big)^\frac{\gamma}{\gamma+1}}
\]
with a power parameter $\gamma$ of $\mathbb R$ and the (diagonal) entropy is given by
\[
H_\gamma({\bm p}) = -\frac{1}{\gamma}{\big({\bf1}^\top\bm p^{\gamma+1}\big)^\frac{1}{\gamma+1}}
\]
equating $\bm\pi$ with $\bm p$, see \cite{{Fujisawa2008},{eguchi2024}}.
When $\gamma=0$, $H_\gamma({\bm p})$ is the Boltzmann-Shannon entropy; when $\gamma=-1$, $H_\gamma({\bm p})$ is the geometric mean,
$\prod_{i=1}^S p_i^\frac{1}{S}$; when $\gamma=-2$, $H_\gamma({\bm p})$ is the harmonic mean, or
$\big(\sum_{i=1}^S p_i^{-1}\big)^{-1}$.
In essence, the $\gamma$-power entropy is  equivalent to the Hill number with a relation of
${}^q\!D({\bm p})=-\gamma \{H_\gamma({\bm p})\}^{-\frac{1}{\gamma}}$ in relation to $q=\gamma+1$.
A fundamental inequality holds: $H_\gamma({\bm \pi},{\bm p})\geq H_\gamma({\bm p})$, where the difference is referred as the $\gamma$-power divergence
\[
{\mathcal D}_\gamma({\bm \pi},{\bm p})=H_\gamma({\bm \pi},{\bm p})-H_\gamma({\bm \pi}).
\]
In general, any statistical divergence is associated with the Riemannian metric and a pair of affine connections, see \cite{eguchi1992geometry}.
The power cross-entropy has the empirical form for regression and classification model, in which the minimization problem gives a robust estimation in a context of machine learning.
When $\gamma$ equals $0$, then $H_\gamma(\bm p)$, $H_\gamma(\bm p,\bm \pi)$ and $D_\gamma(\bm p,\bm\pi)$ are reduced to the Boltzmann-Shannon entropy, the cross-entropy and Kullback-Leibler divergence, respectively.
The empirical form of the cross-entropy under a parametric model is nothing but the negative log-likelihood function,   
see \cite{eguchi2006interpreting} for detailed discussion.
We explore statistical properties for estimation methods for the maximum Hill-number model
${\mathcal P}_{\bms a}^{(q)}$ defined in \eqref{max-hill-model}.

Let $\hat{\bm\pi}$ be an observed frequency vector in $\Delta_{S-1}$ derived from ecological research or auxiliary information, such as prior field surveys, species inventories, or environmental modeling outputs.
Then, the linear constraint is fixed by $\bm  a^\top \bm p=\hat C$,
where $\hat C=\bm a^\top \hat{\bm \pi}$.
In general, the maximum likelihood method is recognized as a universal method for estimating a general parametric model.
Consider the standard  maximum entropy model 
\begin{align}\nonumber
\bm p_\theta^{(1)} = { \exp(\theta\bm a)}/{{\bf1}^\top \exp(\theta\bm a)}.
\end{align}
Then, the log-likelihood function \(
L(\theta)=\hat{\bm\pi}^\top \log(\bm p_{\theta}^{(1)})
\)
leads to the likelihood equation: \(\bm{a}^\top \bm p_{\theta}^{(1)}=\hat C.\)
In effect, the maximum likelihood estimation is equivalent to the maximum Hill number method.
However, other estimation methods are not associated with such equivalence.
In general, we would like to consider the minimum $\gamma$-power method by fixing the power parameter as
$\gamma=q-1$ for the order $q$ of the Hill number.
For an application to ecological studies, we consider the cross Hill number of $q$-order as
\begin{align}\label{q-risk}
	{}^q\! D(\bm\pi,\bm p)=
	{\big({\bf1}^\top  \bm p^{q}\big)}\big({{\bm\pi}^\top  \bm p^{q-1}}\big)^\frac{q}{1-q}.
\end{align}
If $\bm\pi=\bm p$, then  ${}^q\! D(\bm\pi,\bm p)=({\bf1}^\top\bm \pi^{q})^\frac{1}{1-q}$, or the cross Hill number  is reduced to  the Hill number, or ${}^q\! D(\bm \pi)$.
Hence, since $H_\gamma(\bm\pi,\bm p)\geq H_\gamma(\bm\pi)$,
$ 
	{}^q\! D(\bm\pi,\bm p)\geq {}^q\! D(\bm\pi),
$ 
and hence we define the Hill divergence
$$ 
	^q\! {\bm\Delta}(\bm\pi,\bm p)={}^q\! D(\bm\pi,\bm p) -{}^q\! D(\bm\pi).
$$ 
Assume $\bm a^\top \bm p_{\hat\theta}^{(q)}= \hat{C}$ and $\bm a^\top \bm\pi=\hat C$.
Then, we result 
$$
{}^q\! D(\bm p_{\hat\theta}^{(q)})\geq{}^q\! D(\bm\pi)
$$
that is, $\bm p_{\hat\theta}^{(q)}$ is the maximum Hill number distribution.   This is because
\begin{align}\nonumber
{}^q\! D(\bm p_{\hat\theta}^{(q)})
-	{}^q\! D(\bm p_{\hat\theta}^{(q)})={}^q\! {\bm\Delta}(\bm p_{\hat\theta}^{(q)},\bm\pi).
\end{align}
due to $\bm\pi^\top (\bm p_{\hat\theta}^{(q)})^{q-1}=\bm p_{\hat\theta}^{(q)}{}^\top(\bm p_{\hat\theta}^{(q)})^{q-1}=1+\theta \hat C$.
In this way, we can show Proposition \ref{prop2} from ${}^q\! {\bm\Delta}(\bm p_{\hat\theta}^{(q)},\bm\pi)\geq0$.

We propose the  maximum $q$-order  estimator  defined by
\begin{align}\label{gamma}
	\hat\theta^{(q)} = \argmin_{\theta} {}^q\! D(\hat{\bm\pi},\bm p_\theta^{(q)}).
\end{align}
We will observe that the  maximum $q$-order  estimation is equivalent to the method of the maximum Hill number distribution.

\

\begin{proposition}\label{prop5}
	Let $\hat\theta^{(q)}$ be the  maximum $q$-order  estimator defined in \eqref{gamma}.
	Then,  the estimator  $\theta =\hat\theta^{(q)}$ satisfies the linear constraint $\bm a^\top \bm p_{\theta}^{(q)}=\hat C$, where $\hat C=\bm a^\top\bm{\hat\pi}$.
	Furthermore,
	\[
	\max_\theta  {}^q\!D(\hat{\bm\pi},\bm p_\theta^{(q)})=\max_{\bms p\> :\>  \bms{a}^\top\bms p\>=\>\hat{C}}\ {}^q\! D(\bm p)
	\]
	
\end{proposition}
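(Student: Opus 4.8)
The plan is to work directly with $g(\theta):={}^q\!D(\hat{\bm\pi},\bm p_\theta^{(q)})$, exploiting that along the model $\mathcal P_{\bms a}^{(q)}$ the vector $(\bm p_\theta^{(q)})^{q-1}$ is affine in $\theta$: since $\bm p_\theta^{(q)}=Z_\theta^{(q)}({\bf1}+(q-1)\theta\bm a)^{1/(q-1)}$, one has $(\bm p_\theta^{(q)})^{q-1}=(Z_\theta^{(q)})^{q-1}({\bf1}+(q-1)\theta\bm a)$ (for $q>1$, read with the truncation $[\cdot]_+$, i.e.\ restricted to the active species). Hence both scalars in \eqref{q-risk} become elementary: writing $C_\theta:=\bm a^\top\bm p_\theta^{(q)}$ and using $\bm a^\top\hat{\bm\pi}=\hat C$,
\[
\hat{\bm\pi}^\top(\bm p_\theta^{(q)})^{q-1}=(Z_\theta^{(q)})^{q-1}\bigl(1+(q-1)\theta\hat C\bigr),\qquad
{\bf1}^\top(\bm p_\theta^{(q)})^{q}=\bm p_\theta^{(q)}{}^\top(\bm p_\theta^{(q)})^{q-1}=(Z_\theta^{(q)})^{q-1}\bigl(1+(q-1)\theta C_\theta\bigr).
\]
Substituting into ${}^q\!D(\bm\pi,\bm p)=({\bf1}^\top\bm p^{q})(\bm\pi^\top\bm p^{q-1})^{q/(1-q)}$, the powers of $Z_\theta^{(q)}$ combine to an overall factor $(Z_\theta^{(q)})^{-1}$, and using $(Z_\theta^{(q)})^{-1}(1+(q-1)\theta C_\theta)=\sum_j(1+(q-1)\theta a_j)^{q/(q-1)}$ I get the closed form
\[
g(\theta)=\Bigl(\textstyle\sum_{j}(1+(q-1)\theta a_j)^{q/(q-1)}\Bigr)\bigl(1+(q-1)\theta\hat C\bigr)^{q/(1-q)} .
\]

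Next I would differentiate $\log g$. Using $\tfrac{d}{d\theta}\sum_j(1+(q-1)\theta a_j)^{q/(q-1)}=q\sum_j a_j(1+(q-1)\theta a_j)^{1/(q-1)}$ together with $\sum_j a_j(1+(q-1)\theta a_j)^{1/(q-1)}=C_\theta/Z_\theta^{(q)}$ and $\sum_j(1+(q-1)\theta a_j)^{q/(q-1)}=(1+(q-1)\theta C_\theta)/Z_\theta^{(q)}$, everything collapses to
\[
\frac{d}{d\theta}\log g(\theta)=q\left[\frac{C_\theta}{1+(q-1)\theta C_\theta}-\frac{\hat C}{1+(q-1)\theta\hat C}\right].
\]
Since $x\mapsto x/(1+(q-1)\theta x)$ is strictly increasing in $x$ on the relevant range, this derivative vanishes exactly when $C_\theta=\hat C$ and otherwise carries the sign of $C_\theta-\hat C$. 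By the monotonicity of $\theta\mapsto C_\theta$ established in Proposition~\ref{prop2}, there is a unique such $\theta$, say $\theta_0$, and $g$ decreases for $\theta<\theta_0$ and increases for $\theta>\theta_0$; hence the optimizer $\hat\theta^{(q)}$ is the interior point $\theta_0$, and it satisfies $\bm a^\top\bm p_{\hat\theta^{(q)}}^{(q)}=\hat C$. This proves the first assertion.

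For the equality of extremal values I would invoke the algebraic remark recorded just before the proposition: at $\theta_0$ both $\bm a^\top\hat{\bm\pi}=\hat C$ and $\bm a^\top\bm p_{\theta_0}^{(q)}=\hat C$, so $\hat{\bm\pi}^\top(\bm p_{\theta_0}^{(q)})^{q-1}=\bm p_{\theta_0}^{(q)}{}^\top(\bm p_{\theta_0}^{(q)})^{q-1}={\bf1}^\top(\bm p_{\theta_0}^{(q)})^{q}$. Plugging this common value into \eqref{q-risk}, the exponents $1$ and $q/(1-q)$ sum to $1/(1-q)$, giving ${}^q\!D(\hat{\bm\pi},\bm p_{\theta_0}^{(q)})=\bigl({\bf1}^\top(\bm p_{\theta_0}^{(q)})^{q}\bigr)^{1/(1-q)}={}^q\!D(\bm p_{\theta_0}^{(q)})$, i.e.\ the cross Hill number coincides there with the ordinary one. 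Finally Proposition~\ref{prop2} identifies $\bm p_{\theta_0}^{(q)}$ as the maximizer of ${}^q\!D(\bm p)$ over $\mathcal H_{\bms a}(\hat C)=\{\bm p:\bm a^\top\bm p=\hat C\}$, so ${}^q\!D(\bm p_{\theta_0}^{(q)})=\max_{\bm a^\top\bm p=\hat C}{}^q\!D(\bm p)$; chaining the equalities yields the displayed identity. The direction of the $\theta$-optimization here is consistent with the divergence inequality ${}^q\!{\bm\Delta}(\hat{\bm\pi},\bm p_\theta^{(q)})\ge0$, which bounds $g$ below by the constant ${}^q\!D(\hat{\bm\pi})$, forcing the unique critical point $\theta_0$ to be the extremum of $g$.

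The hard part is really just the bookkeeping in the first two paragraphs: one must check that the normalizer powers $Z_\theta^{(q)}$ cancel exactly and that the partial sums $\sum_j(1+(q-1)\theta a_j)^{r}$ for $r\in\{1/(q-1),\,q/(q-1)\}$ are correctly re-expressed through $Z_\theta^{(q)}$ and $C_\theta$; once the closed form of $g$ is in hand, the stationarity computation and the rest are short. A secondary point needing care is the case $q>1$, where the truncation $[\cdot]_+$ can send some coordinates of $\bm p_{\theta_0}^{(q)}$ to zero: the whole argument then runs on the active support, and the identity ${}^q\!D(\hat{\bm\pi},\bm p_{\theta_0}^{(q)})={}^q\!D(\bm p_{\theta_0}^{(q)})$ is cleanest when $\hat{\bm\pi}$ is supported there; for $q\le1$ the maximizer is interior and no such issue arises.
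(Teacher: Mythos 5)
Your proposal is correct and follows essentially the same route as the paper's proof: you simplify the cross Hill number along the model by cancelling the normalizer, show the $\theta$-derivative is proportional to $\bm a^\top \bm p_\theta^{(q)}-\hat C$, and then identify the value at the critical point with the constrained maximum of ${}^q\!D$ via Proposition \ref{prop2}. Your extra care about the sign of the derivative (the critical point is a minimum of the cross Hill number, consistent with the argmin in \eqref{gamma}) and about the $q>1$ truncation only makes explicit details the paper leaves implicit.
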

\begin{proof}
	The $q$-order cross Hill number \eqref{q-risk}  is simplified at $(\hat{\bm\pi},\bm p_\theta^{(q)})$, noting the definition of $\hat C$:
	\begin{align}\nonumber
		{}^q\!D(\hat{\bm\pi},\bm p_\theta^{(q)}) =
		(1+(q-1)\theta \hat C )^\frac{q}{1-q}\big({\bf1}^\top ({\bf1}-(q-1)\theta\bm a)^\frac{q}{q-1}\big).
	\end{align}
	Hence, the gradient is given by
	\begin{align}\nonumber
		\nabla_\theta {}^q\! D(\hat{\bm\pi},\bm p_\theta^{(q)})=\frac{q}{1-q}\Big\{
		\hat C (1+\tilde\theta \hat C )^\frac{1}{1-q}\big({\bf1}^\top ({\bf1}-\tilde\theta\bm a)^\frac{q}{q-1}\big)
		-(1+\tilde\theta \hat C )^\frac{q}{1-q}\big({\bm a}^\top ({\bf1}-\tilde\theta\bm a)^\frac{1}{q-1}\big)\Big\}
	\end{align}
	which is written as
	\begin{align}\nonumber
		\frac{q}{1-q}
		{  (1+\tilde\theta \hat C )^\frac{1}{1-q}\big({\bf1}^\top ({\bf1}-\tilde\theta\bm a)^\frac{q}{q-1}\big)}
		\big( \hat C- \bm a^\top \bm p_\theta^{(q)}\big)
	\end{align}
	due to the definition of $\bm p_\theta^{(q)}$, where $\tilde\theta=(q-1)\theta$.
	Equating the gradient to $0$ is equivalent to $\bm a^\top \bm p_{\theta}^{(q)}=\hat C$.
	The linear constraint is exactly satisfied at $\theta =\hat\theta^{(q)}$ since, by definition,  $\hat\theta^{(q)}$ is the solution of the gradient equation.
	Further, if $\theta =\hat\theta^{(q)}$, then $1+\theta\hat C= \bm p_\theta^{(q)}{}^\top({\bf1}-\theta\bm a)$,
	so that
	\begin{align}\nonumber
		{}^q\!D(\hat{\bm\pi},\bm p_{\theta}^{(q)}) =
		{\big({\bf1}^\top (\bm p_{\theta}^{(q)})^{q}\big)}\big({{\bm p_{\theta}^{(q)}}^\top  (\bm p_{\theta}^{(q)})^{q-1}}\big)^\frac{q}{1-q}.
	\end{align}
	This is nothing but Hill number ${}^q\!D(\bm p_{\theta}^{(q)}) $ that attains the maximum at $\theta =\hat\theta^{(q)}$ under the linear constraint.
	This completes the proof.
\end{proof}
In the proof, we observe a surprising property: $\nabla_\theta {}^q\! D(\hat{\bm\pi},\bm p_\theta^{(q)})\propto \hat C- \bm a^\top \bm p_\theta^{(q)}$.
This directly shows the equivalence between the maximum $q$-order estimation and the maximum distribution for the Hill number of the order $q$.
Thus, the maxim $q$-order estimation naturally introduce the maximum Hill number of the order $q$.

We consider a more general setting of linear constraint that allows multiple linear constraints. Let ${\bm A}$ be a matrix of size $K\times S$ with full rank and
\begin{align}\label{constraint}
	{\mathcal H}_{\bms A}(\bm C)=\{{\bm p}  : {\bm A} {\bm p} ={\bm C}\}
\end{align}
where $\bm C$ is a constant vector of $K$ dimension.
This means $K$ constraints $\bm a_j\top \bm p=C_j\ (j=1,...,K)$, where $\bm a_j$ is the $j$th row vector of $\bm A$. 
Consider the maximum Hill-number distribution under such  linear constraints.
We have an argument similar to that for one-dimensional constraint as discussed above. 
Thus, we introduce Lagrange multipliers \( \mu \), \( {\bm\lambda}  \) and \( {\bm\nu}  \) to incorporate the constraints, and form the Lagrangian function:
\[
\mathcal{L} = \sum_{i=1}^S p_i{}^q + {\bm\lambda}^\top \left({\bm A}{\bm p} -{\bm C} \right) + \mu \left( {\bf1}^\top \bm p  - 1 \right)+\bm\nu^\top \bm p.
\]
The stationarity condition gives the solution form:
\[
\bm p ^{(q)}_{\bms\theta} = \frac{\big [{\bf1}+(q-1){\bm A}^\top{\bm \theta}\big]_+^{\frac{1}{q - 1}}}{{\bf1}^\top \big[{\bf1}+(q-1){\bm A}^\top{\bm\theta}]_+^\frac{1}{q - 1}}.
\]
combining both cases of $q<1$ and $q>1$.  
It is necessary to define a region of feasible $\bm C$ values to ensure that $\bm\theta$ satisfies the linear constraints. However, we omit this discussion as it closely parallels the proof of Proposition \ref{prop2}.
The gradient of the $q$-order risk function for the model $\{\bm p_{\bms\theta}^{(q)}\}_{\bms\theta}$ satisfies
\[
\nabla_{\bms\theta\ } {}^{q}\!D(\hat{\bm\pi},\bm p ^{(q)}_{\bms\theta})\propto \hat{\bm C} -{\bm A} {\bm p}_{\bms{\theta}}^{(q)}
\] 
Hence, the maximum $q$-order estimator $\hat{\bm\theta}^{(q)}$ satisfies the linear constraints
$
{\bm A} {\bm p}_{\bms{\hat\theta}^{(q)}}=\hat{\bm C}.
$
Let us consider the asymptotic distribution of  $\hat{\bm\theta}^{(q)}$ under an assumption: the observed frequency vector $\hat {\bm\pi}$ is randomly sampled from a categorical distribution with a frequency vector $\bm\pi$ with size $N$.
Then, $\sqrt N(\hat{\bm\pi}-{\bm\pi})$ asymptotically converges to a normal distribution with mean $\bf 0$ and covariance $\bm G_{\bms\pi}^{-1}$,
where $\bm G_{\bms\pi}$ is the Fisher-Rao information matrix  as defined in \eqref{FR}.
Hence, $\sqrt N(\hat{\bm\theta}^{(q)}-{\bm\theta})$ converges to a normal distribution with mean $\bf0$ and covariance
$$
({\bm A} \bm J_{\bms\theta})^{-1}{\bm A}G_{\bms\pi} ^{-1}{\bm A}^\top (\bm J_{\bms\theta}^\top {\bm A}^\top)^{-1},
$$
where ${\bm J}_{\bms\theta}$ is the Jacobian matrix of the model ${\bm p}_{\bms\theta}^{(q)}$, or 
$$
{\bm J}_{{\bms\theta}} := \nabla_{\bms\theta}{\bm p_{{\bms\theta}}^{(q)}}=
\Big({\rm diag}\big( \bm p_{\bms\theta}\odot ({\bf1}+\bm A^{\top})^{-1}\big)-
\frac{{\bf1}^\top({\bf1}+\bm A^{\top})^\frac{2-q}{q-1}}{{\bf1}^\top({\bf1}+\bm A^{\top})^\frac{1}{q-1}}{\rm diag}\big(\bm p_{\bms\theta}\big)\Big)\bm A^\top.
$$
Finally, we note that, if $q=1$, the Hill number of order $q$ become equivalent to the Boltzmann-Shannon entropy, and
the minimum $q$-order estimator is reduced the maximum likelihood estimator.

This section demonstrates how the framework of information geometry, particularly through the concept of $q$-geodesics, enriches our understanding of diversity measures on the simplex $\Delta_{S-1}$. 
By connecting statistical divergences, geodesic paths, and maximum diversity distributions, we establish a geometric interpretation of the Hill numbers and related diversity indices. 
The foliation of the simplex into hyperplanes corresponding to linear constraints further illustrates the interplay between geometric structures and ecological considerations. 
These insights not only deepen the theoretical foundations but also provide practical tools for analyzing and optimizing biodiversity under various constraints.



\section{Discussion}\label{sect5}
The maximum diversity distribution under a linear constraint embodies the interplay between species-specific ecological traits and the overall diversity of the community. 
It provides valuable insights into how ecological factors shape community composition and offers practical implications for biodiversity conservation and ecosystem management. 
By considering both the mathematical optimum and ecological principles, ecologists can better interpret diversity patterns, predict ecological dynamics, and set informed conservation goals that align with the inherent constraints of natural ecosystems.
By selecting different values of \( q \), ecologists can focus on different aspects of biodiversity (dominance or rarity), making this a powerful tool for assessing community dynamics, resource competition, and conservation strategies in response to environmental constraints.
For conservation applications, comparing observed abundance distribution to the theoretical maximal Hill number distributions offers a valuable diagnostic tool. 
Alignment with the maximal distribution suggests that the community may have reached an optimal diversity configuration for its specific ecological context, whereas deviations indicate potential imbalances or stresses, such as dominance by invasive species or suppression of keystone species. 
This comparison helps identify conservation needs and highlights areas for intervention, guiding efforts to maintain a balanced, resilient ecosystem. 
Monitoring these deviations over time can also track how community structure responds to environmental pressures, providing insights into resilience and adaptation.

Species Distribution Modeling (SDM) is a statistical approach used to predict the geographic distribution of species based on their known occurrences and environmental conditions. SDMs relate species presence-absence or presence-only data to environmental variables (e.g., temperature, precipitation, habitat features) to estimate the probability of species occurrence or abundance across spatial landscapes. These models are widely applied in ecology for biodiversity assessments, habitat suitability mapping, and predicting the impacts of environmental changes, such as climate change, on species distributions, see \cite{{southwood2009},{elith2009species},{Komori2019},{saigusa2024robust}}.
We discuss a potential framework and its implications for incorporating SDM into the approach of maximum diversity distributions represents.
SDMs naturally account for environmental covariates (temperature, precipitation, habitat types), providing realistic constraints for diversity maximization.
Using the predicted relative abundances from SDMs, the observed diversity can be directly compared to the maximum diversity, offering insights into the extent to which ecological constraints influence community assembly, see \cite{kusumoto2023}.
In areas with significant deviation from maximum diversity distributions, SDMs can provide insights into which species are missing or disproportionately abundant, suggesting specific restoration or management interventions.

Maximum diversity distributions offer a powerful framework that can extend beyond ecology to other domains like economics and engineering, providing insights into resource allocation, system robustness, and optimization under constraints.
In portfolio planning, maximum diversity principles can balance risk and return by distributing investments across assets.
In control problems, diversity-based strategies enhance robustness by distributing control efforts or resources across system components.
Applications include energy management in networks, feedback control in complex systems, and adaptive control for dynamic environments (e.g., autonomous systems, smart grids).
By adapting maximum diversity distributions to these fields,we can address challenges in resource optimization, system stability, and risk management.

\section*{Appendix}

We explore the ranges of $t$ for defining a mixture geodesic and $q$-geodesic in $\Delta_{S-1}$. 

\begin{proposition}\label{prop6}
	
	Consider a  mixture geodesic $\bm p^{\rm (m)}_t =(1-t)\bm \pi+t\bm p$ connecting  $\bm\pi$ and $\bm p$ in the interior of $\Delta_{S-1}$, where $0\leq t\leq 1$.  The range of $t$ is enlarged to a closed interval
	\[
	\left[ -\min\Big( \frac{1}{R_1-1},\frac{1}{Q_1-1}\Big),1+\min\Big( \frac{1}{R_2-1},\frac{1}{Q_2-1}\Big) \right] 
	\] 
	where
	\[
	R_1=\max_{i} \frac{p_i}{\pi_i}, \quad R_2=\max_{i} \frac{\pi_i}{p_i}, \quad
	Q_1=\max_{i} \frac{1-p_i}{1-\pi_i}, \quad Q_2=\max_{i} \frac{1-\pi_i}{1-p_i}.
	\]
\end{proposition}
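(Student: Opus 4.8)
The plan is to reduce the geometric requirement $\bm p^{\rm (m)}_t\in\Delta_{S-1}$ to a finite family of linear inequalities in the single scalar $t$, and then intersect the resulting half-lines to obtain a closed interval.

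\textbf{Step 1: removing the sum constraint.} Since $\sum_{i=1}^S\big((1-t)\pi_i+tp_i\big)=(1-t)+t=1$ for every $t$, the vector $\bm p^{\rm (m)}_t$ automatically satisfies $\mathbf 1^\top\bm p^{\rm (m)}_t=1$, so membership in $\Delta_{S-1}$ is equivalent to the coordinatewise bounds $0\le(1-t)\pi_i+tp_i\le 1$ for all $i$. Writing the $i$-th component as $\pi_i+t(p_i-\pi_i)$, the lower bound $\pi_i+t(p_i-\pi_i)\ge 0$ and the upper bound, which rearranges to $(1-\pi_i)+t\big((1-p_i)-(1-\pi_i)\big)\ge 0$, have exactly the same affine shape: the second is obtained from the first under the substitution $(\pi_i,p_i)\mapsto(1-\pi_i,1-p_i)$. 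Hence it suffices to analyse the lower bound and then apply that substitution.

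\textbf{Step 2: solving one family of inequalities and intersecting.} For fixed $i$, if $p_i=\pi_i$ the inequality holds for all $t$; if $p_i>\pi_i$ it is equivalent to $t\ge -1/\big((p_i/\pi_i)-1\big)$; and if $p_i<\pi_i$ it is equivalent to $t\le 1+1/\big((\pi_i/p_i)-1\big)$. Intersecting over $i$, the lower constraints are most restrictive at the index maximising $p_i/\pi_i$, giving $t\ge -1/(R_1-1)$, and the upper constraints are most restrictive at the index maximising $\pi_i/p_i$, giving $t\le 1+1/(R_2-1)$. Performing the substitution $(\pi_i,p_i)\mapsto(1-\pi_i,1-p_i)$ converts $R_1,R_2$ into $Q_1,Q_2$ and yields the complementary pair $t\ge -1/(Q_1-1)$ and $t\le 1+1/(Q_2-1)$. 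Taking the intersection of all four half-lines gives exactly the interval in the statement. I would also record here that, because $\bm\pi\ne\bm p$ and both lie in the interior, there is at least one index with $p_i>\pi_i$ and one with $p_i<\pi_i$, so $R_1,R_2,Q_1,Q_2>1$, the four endpoints are finite, and the interval manifestly contains $[0,1]$ (convexity of $\Delta_{S-1}$ already guarantees feasibility there), so the original parametrisation is recovered.

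\textbf{Main difficulty.} There is no serious obstacle here; the only care needed is the bookkeeping of which index is binding for each one-sided family and checking that the denominators $R_1-1,R_2-1,Q_1-1,Q_2-1$ are strictly positive so the reciprocals — and the $\min$'s — are well defined. One further point I would note: since $\tfrac{1-p_i}{1-\pi_i}$ is a $\pi$-weighted average of the ratios $\{p_j/\pi_j:j\ne i\}$, one has $Q_1\le R_1$ and $Q_2\le R_2$, so in fact the minima in the statement are always attained by the $R$-terms; I would keep the symmetric form as stated but mention this simplification.
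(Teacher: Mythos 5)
Your proof is correct and follows essentially the same route as the paper's: both reduce membership of $\bm p^{\rm (m)}_t$ in $\Delta_{S-1}$ to the coordinatewise linear inequalities $-\pi_i \le (p_i-\pi_i)t \le 1-\pi_i$, solve each in $t$, and intersect, with $R_1,R_2,Q_1,Q_2$ identifying the binding indices (you organize by lower/upper bounds plus the substitution $(\pi_i,p_i)\mapsto(1-\pi_i,1-p_i)$, while the paper splits on the sign of $p_i-\pi_i$ — the same bookkeeping). Your added remark that $Q_1\le R_1$ and $Q_2\le R_2$, so the minima are attained by the $R$-terms, is a correct refinement not present in the paper, but it does not alter the argument.
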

\begin{proof}
	By definition,
	\begin{align}\label{ineq0}
		-\pi_i<(p_i-\pi_i)t<1-\pi_i\quad(i=1,...,S).
	\end{align}
	If $p_i>\pi_i$, then 
	\[
	-\frac{1}{\frac{p_i}{\pi_i}-1}<t<1+\frac{1}{\frac{1-\pi_i}{1-p_i}-1} \quad( i=1,...,S).
	\]
	This implies
	\begin{align}\label{ineq1}
		-\frac{1}{\max_i \frac{p_i}{\pi_i}-1}<t<1+\frac{1}{\max_i \frac{1-\pi_i}{1-p_i}-1}.
	\end{align}
	Similarly, if $p_i<\pi_i$, then
	\begin{align}\label{ineq2}
		-\frac{1}{\max_i \frac{1-p_i}{1-\pi_i}-1}<t<1+\frac{1}{\max_i \frac{\pi_i}{p_i}-1}.
	\end{align}
	Hence, this yields the closed interval combining inequalities \eqref{ineq1} and \eqref{ineq2}.
	
\end{proof}

We next focus on a case of the $q$-geodesic.
\begin{proposition}
	Consider a $q$-geodesic 
	$$
	\bm p^{(q)}_t =Z_t \{(1-t)\bm \pi^{q-1}+t\bm p^{q-1}\}^\frac{1}{q-1}
	$$ 
	connecting  $\bm\pi$ and $\bm p$ in in the interior of $\Delta_{S-1}$, where $0\leq t\leq 1$ and $z_t$ is the normalizing constant.  The range of $t$ is enlarged to a closed interval
	\begin{align}\label{CI2}
		\left[ -\min\Big( \frac{1}{R_1^{(q)}-1},\frac{1}{Q_1^{(q)}-1}\Big),1+\min\Big( \frac{1}{R_2^{(q)}-1},\frac{1}{Q_2^{(q)}-1}\Big) \right] 
	\end{align} 
	where
	\begin{align}\nonumber 
		R_1^{(q)}=\max_{i} \frac{p_i^{q-1}}{\pi_i^{q-1}}, \quad R_2^{(q)}=\max_{i} \frac{\pi_i^{q-1}}{p_i^{q-1}}, \quad
		Q_1^{(q)}=\max_{i} \frac{1-(Z_tp_i)^{q-1}}{1-(Z_t\pi_i)^{q-1}}, \quad Q_2^{(q)}=\max_{i} \frac{1-(Z_t\pi_i)^{q-1}}{1-(Z_tp_i)^{q-1}}.
	\end{align}
\end{proposition}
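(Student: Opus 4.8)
The plan is to run the argument of Proposition~\ref{prop6} coordinatewise: the point $\bm p^{(q)}_t$ is an interior point of $\Delta_{S-1}$ if and only if $0<p^{(q)}_{t,i}<1$ for every $i$, so I would translate each of these two inequalities into a range for $t$ and then intersect over $i$ and over the two inequalities. Throughout I assume $q\notin\{0,1\}$, since $q=0$ is the mixture case already settled in Proposition~\ref{prop6} and $q=1$ is the exponential limit where $1/(q-1)$ must be read as a limit.

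For the lower bound $p^{(q)}_{t,i}>0$: because $Z_t>0$ and $x\mapsto x^{1/(q-1)}$ is positive precisely on $x>0$ (whatever the sign of $q-1$), this is equivalent to $v_i(t):=(1-t)\pi_i^{q-1}+t\,p_i^{q-1}>0$. Setting $\varpi_i=\pi_i^{q-1}$ and $u_i=p_i^{q-1}$, the inequality $(1-t)\varpi_i+t\,u_i>0$ is formally identical to the positivity condition dissected in \eqref{ineq0}--\eqref{ineq2}, so the same split into the cases $u_i>\varpi_i$ and $u_i<\varpi_i$ produces the one-sided constraints $t>-1/(R_1^{(q)}-1)$ and $t<1+1/(R_2^{(q)}-1)$, with $R_1^{(q)}=\max_i u_i/\varpi_i$ and $R_2^{(q)}=\max_i \varpi_i/u_i$ as in the statement.

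For the upper bound $p^{(q)}_{t,i}<1$: I would raise to the power $q-1$. From $p^{(q)}_{t,i}=Z_t\,v_i(t)^{1/(q-1)}$ one gets $(p^{(q)}_{t,i})^{q-1}=(1-t)(Z_t\pi_i)^{q-1}+t\,(Z_t p_i)^{q-1}$, again an affine function of the transformed endpoints $(Z_t\pi_i)^{q-1}$ and $(Z_t p_i)^{q-1}$. Since $p^{(q)}_{t,i}<1$ is equivalent to $(p^{(q)}_{t,i})^{q-1}<1$ for $q>1$ and to $(p^{(q)}_{t,i})^{q-1}>1$ for $q<1$, substituting the affine expression and repeating the same two-case analysis yields $t>-1/(Q_1^{(q)}-1)$ and $t<1+1/(Q_2^{(q)}-1)$. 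Intersecting the four one-sided constraints gives exactly \eqref{CI2}.

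The delicate point, where I expect the real work to lie, is that $Z_t=\big(\sum_j v_j(t)^{1/(q-1)}\big)^{-1}$ depends on $t$, so $Q_1^{(q)},Q_2^{(q)}$ are not constants and \eqref{CI2} is an implicit rather than explicit description of the admissible set. I would need to show this set is a genuine nonempty interval containing $t=0$ and $t=1$: first checking that $\{t:v_j(t)>0\ \forall j\}$ is a bounded open interval about $[0,1]$ (each $v_j$ is affine with $v_j(0),v_j(1)>0$), then arguing monotonicity of $t\mapsto Z_t$ there, and finally — mirroring the fact that in Proposition~\ref{prop6} one has $Q_1\le R_1$, $Q_2\le R_2$, which renders the $Q$-terms inactive — verifying that once all coordinates are positive the upper-bound constraints are automatically satisfied (the coordinates sum to $1$ with $S\ge2$), so the $Q^{(q)}$-terms are consistent with and no stronger than the $R^{(q)}$-terms. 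Keeping the order-reversal of $x\mapsto x^{q-1}$ for $q<1$ straight in every case split is the other place where sign errors are easy to commit.
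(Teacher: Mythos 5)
Your proposal is correct and follows essentially the same route as the paper: rewrite the coordinatewise condition $0<p^{(q)}_{t,i}<1$ as an affine double inequality in $t$ of the same shape as \eqref{ineq0} and rerun the case analysis of Proposition~\ref{prop6}, the only cosmetic difference being that you treat the positivity constraint separately (where $Z_t$ cancels in the ratios, giving the $R^{(q)}$ bounds) from the $<1$ constraint (giving the $Q^{(q)}$ bounds), while the paper handles both at once via $\Pi_i=(Z_t\pi_i)^{q-1}$ and $P_i=(Z_tp_i)^{q-1}$. The delicate point you flag --- that $Z_t$ depends on $t$, so the $Q^{(q)}$ terms render \eqref{CI2} implicit --- is genuine but is passed over silently in the paper's proof as well, and your proposed resolution (positivity plus normalization already forces each coordinate below one, so the $Q^{(q)}$ constraints are never binding) is a sound way to close it.
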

\begin{proof}
	By definition,
	\[
	-(Z_t \pi_i)^{q-1}<\{(Z_t p_i)^{q-1}-(Z_t\pi_i^{q-1})\}t<1-(Z_t\pi_i)^{q-1}          \quad(i=1,...,S).
	\]
	which can be written as
	\[
	-\Pi_i<(P_i-\Pi_i\}t<1-\Pi_i          \quad(i=1,...,S).
	\]
	where $\Pi_i=(Z_t \pi_i)^{q-1}$ and $P_i=(Z_t p_i)^{q-1}$.
	This is essentially equal to \eqref{ineq0}.
	Therefore,  we conclude the closed interval \eqref{CI2} by the same discussion as that in the proof of Proposition \ref{prop6}.
\end{proof}

\end{document}